\newcommand{\HH}{{\mathcal{H}\!\oplus\!\mathcal{H}}}
\newcommand\oP[1]{{#1}_{\tiny{\odot}}}
\newcommand\vE[2]{{\footnotesize\begin{pmatrix}{#1}\\[-.5mm]{#2}\end{pmatrix}}}
\newcommand{\abs}[1]{\left|#1\right|}
\newcommand\lrb[1]{\left\lbrace#1 \right\rbrace}
\newcommand\lrp[1]{{\left(#1 \right)}}
\newcommand\C{{\mathbb C}}
\newcommand\N{{\mathbb N}}
\newcommand\R{{\mathbb R}}
\newcommand\cc[1]{\overline {#1}}
\newcommand\ip[2]{\left\langle {#1},{#2} \right\rangle}
\newcommand\no[1]{\left\| {#1} \right\|}
\newcommand\kT[1]{\mathfrak K\!\left(#1\right)}
\newcommand\dS{{\rm\bf N}}
\DeclareMathOperator{\dom}{dom}
\DeclareMathOperator{\ran}{ran}
\DeclareMathOperator{\mul}{mul}
\DeclareMathOperator{\Span}{span}
\numberwithin{equation}{section}
\newtheorem{theorem}{Theorem}[section]
\newtheorem{lemma}[theorem]{Lemma}
\newtheorem{corollary}[theorem]{Corollary}
\newtheorem{proposition}[theorem]{Proposition}
\theoremstyle{definition}
\newtheorem{definition}[theorem]{Definition}
\theoremstyle{remark}
\newtheorem{remark}[theorem]{Remark}
\begin{document}
\title[The Krein transform and semi-bounded extensions of relations]{The Krein transform and semi-bounded extensions of semi-bounded linear relations}
\author[J. I. Rios-Cangas]{Josu\'e I. Rios-Cangas}
\address{Departamento de Matem\'aticas, Universidad Aut\'onoma Metropolitana, Iztapalapa Campus,  San Rafael Atlixco 186, 09340 Iztapalapa, Mexico City.}
\email{jottsmok@xanum.uam.mx}

\date{\today}
\subjclass{Primary 47A06; Secondary 47A45; 47B65}

\keywords{Linear relations; Krein transform; Semi-bounded extensions}

\begin{abstract}
The Krein transform is the real counterpart of the Cayley transform and gives a one-to-one correspondence between the positive relations and symmetric contractions. It is treated with a slight variation of the usual one, resulting in an involution for linear relations. On the other hand, a semi-bounded linear relation has closed semi-bounded symmetric extensions with semi-bounded selfadjoint extensions.  A self-consistent theory of semi-bounded symmetric extensions of semi-bounded linear relations is presented. By using The Krein transform, a formula of positive extensions of quasi-null relations is provided.
\end{abstract}

\maketitle

\section{Introduction}\label{s1}
For a separable Hilbert space $(\mathcal  H,\ip{\cdot}{\cdot})$ over $\C$, with inner product antilinear in its first entry, we consider linear relations \cite{MR0123188} (or multivalued linear operators \cite{MR1631548}) as  linear sets in the Hilbert space $\HH$, where  $\HH$ is the orthogonal sum of two copies of $\mathcal  H$ \cite[Chap.\,2,\,Sect.\,3.3]{MR1192782}. A semi-bounded linear relation is a symmetric relation $A$ for which  $A-mI\leq 0$ or $A-mI\geq 0$, for some $m\in\R$. It turns out to be that $A$ is either selfadjoint or has closed symmetric extensions with selfadjoint extensions. Particularly, $A$ has semi-bounded selfadjoint extensions (see  Proposition~\ref{prop:existence-sbe}). 

In this paper, we are interested in describing the semi-bounded extensions of semi-bounded relations and, in particular, the positive extensions of positive linear relations. In the first instance, one can determine all the symmetric extensions of a semi-bounded relation employing the so-called second von Neumann formula for linear relations (see for instance \cite[Thm.\,4.7]{MR4082306}), where the Cayley transform plays a crucial role. We will attempt to use analogous reasoning of the previously cited formula to obtain through the Krein transform an explicit formula that describes the positive extensions of a positive relation. It is worth mentioning that the theory of positive selfadjoint extensions has recently applications to Sturm-Liouville operators and elliptic second order differential operators such as Laplacian and the Aharonov-Bohm operators \cite{MR4410827,MR4013751}.   

Krein in his work \cite{MR0024574} proved that there exist two peculiar positive selfadjoint extensions of a densely defined positive symmetric operator $A$. Namely, the Friedrichs and Krein-von Neumann extensions of $A$ (cf.\cite{MR1902794} and \cite[Sect.\,13.3]{MR2953553}), which allow exhibiting  a characterization of all positive selfadjoint extensions of $A$. The last result was extended in \cite{MR500265} for nondensely defined symmetric operators and, actually, for symmetric linear relations. Besides, through the Friedrichs-Krein type extensions and with the help of closed semi-bounded forms, one can show a description of all the semi-bounded self-adjoint extensions of a semi-bounded relation \cite{MR3971207}. 

This paper provides a natural and intrinsic way to describe the semi-bounded extensions of semi-bounded linear relations. We emphasize that the extensions we consider here are without exit to a larger space \cite[Appendix\,1]{MR1255973}. In this fashion, and for the reader's convenience, we first deal in Section~\ref{s2} with a general framework on the theory of linear relations, which can be found in \cite{MR4082306,MR4091412}. We characterize in Section~\ref{s3} the semi-bounded relations concerning its quasi-regular set (see Theorem~\ref{eq:GS-bound} and Corollary~\ref{cor:charact-usb}), its spectrum, and its operator part (see Theorem~\ref{th:Alusb-oP}). To describe the semi-bounded extensions of semi-bounded relations, we address in Section~\ref{s4} the Krein transform for linear relations, which is the real counterpart of the Cayley transform. Here, we attend the Krein transform with a minor variation of the usual one \cite[Sect.\,13.4]{MR2953553}, which yields the additional ``desirable'' property of being an involution for linear relations (the first property of \eqref{eq:Kt-properties}). Besides, this transform gives a univocal correspondence between positive relations and symmetric contractions (see Theorem~\ref{th:positive-Kt-contraction}). After that, we show in Section~\ref{s5} that for a closed lower semi-bounded relation $A$ with finite index $\eta_A$ and greatest lower bound $m_A$, every number $\alpha\in(\infty,m_A)$ is in one-to-one correspondence with the lower semi-bounded selfadjoint extension of $A$, which has $\alpha$ as an eigenvalue of multiplicity $\eta_A$ and greatest lower bound. A similar situation occurs for upper semi-bounded relations (see Theorem~\ref{th:sa-extension-lusb}). Also, we prove in Theorem~\ref{th:positive-extensions} that any closed symmetric extension $S$ of a closed positive relation $A$, is positive if and only if it can be decomposed uniquely into an orthogonal sum $S=A\oplus L$, where $L$ is a positive relation contained in $A^*$. Additionally, Theorem~\ref{th:positive-extensions-qn} shows that the positive extension $S$ follows a similar structure to the second von Neumann formula \eqref{eq:von02d} when $A$ is quasi-null (see Definition~\ref{def:quasi-null}). 

As an illustration of the general results of this work, we provided in the last section the description of all the semi-bounded extensions (in particular, positive and quasi-null extensions), as well as their spectra, of the adjacency operator of the infinite star graph.

\section{Preliminaries about linear relations}\label{s2}
A linear set $T$ of $\HH$ is called \emph{linear relation} (or simply relation), where
\begin{align*}
\dom T&\colonequals\lrb{f\in \mathcal  H\,:\, \vE fg\in T}\,,&
\ran T&\colonequals\lrb{g\in \mathcal  H\,:\, \vE fg\in T}\,,\\[1mm]
\ker T&\colonequals\lrb{f\in \mathcal  H\,:\, \vE f0\in T}\,,&
\mul T&\colonequals\lrb{g\in \mathcal  H\,:\, \vE 0g\in T}\,,
\end{align*} 
denote the domain, range, kernel and multivalued of $T$, respectively.

We emphasize the term ``linear set'' instead of the usual ``subspace'', since the last one in this work is reserved for closed linear sets. Thereby, the \emph{closure} $\cc T$ of a relation $T$ is a subspace of $\HH$. 

For relations $T,S$ and $\alpha\in\C$, we consider the following relations
\begin{align*}
T+S&\colonequals\lrb{\vE f{g+h}\,:\,\vE fg\in T,\,\vE fh\in S}\,,&
\zeta T&\colonequals \lrb{ \vE f{\zeta g}\,:\,\vE fg\in T}\,,\\[1mm]
ST&\colonequals\lrb{ \vE fk\,:\,\vE fg\in T,\,\vE gk\in S}\,,& T^{-1}&\colonequals\lrb{\vE gf\,:\, \vE fg\in T }\,.
\end{align*}
We also deal with the \emph{adjoint} of $T$ given by
\begin{gather*}
 T^*\colonequals\lrb{\vE hk\in \HH\,:\,\ip kf=\ip hg,\, \forall\, \vE fg\in T}\,,
\end{gather*}
which is a closed relation such that
\begin{align}\label{eq:p-Adjoint}
T^*&=(-T^{-1})^{\perp},&S\subset  T&\Rightarrow T^*\subset S^*,\nonumber\\
T^{**}&=\cc T,& (\alpha T)^*&=\cc{\alpha} T^*,\,\mbox{ with } \alpha\neq0,\\
(T^*)^{-1}&=(T^{-1})^*,&\ker  T^*&=(\ran  T)^{\perp}.\nonumber
\end{align}
 The last property of \eqref{eq:p-Adjoint} decomposes the Hilbert space as follows:
 \begin{gather}\label{eq:Hdecomposition}
 \mathcal  H=\cc{\ran T}\oplus\ker T^*\,.
 \end{gather} 
 
 \begin{remark}\label{rm:dist-adjoint} If $\dom T\subset \dom S$ and $\dom (T+S)^*\subset \dom S^*$, then 
\begin{gather*}
(T+S)^*=T^*+S^*\,.
\end{gather*}
The above is adapted from the proof of \cite[Thm.\,3.41]{MR0123188}. 
\end{remark}
 
In the context of linear relations, the notion to be bounded is not unique \cite{MR1631548} (see also \cite{MR3750592,MR2993376}). 
In this work, a relation $T$ is \emph{bounded} if there exists $C>0$, such that $\no g\leq C\no f$, for all $\vE fg\in T$. In this sense, the multivalued of any bounded relation $T$ is trivial, viz. $T$ is a linear operator.  

\begin{remark} Following the proof of \cite[Thm.\,3.2.3]{MR1192782}, for closed relations $T, S$, with $S$ bounded, one readily computes that $T+S$ is closed. 
\end{remark}

The \emph{quasi-regular} set
$\hat\rho(T)$ of a relation $T$ is 
\begin{gather*}
\hat\rho(T)\colonequals\lrb{\zeta \in \C\
:\ (T-\zeta I)^{-1}\mbox{ is bounded}}\,,
\end{gather*} 
which is an open set in $\C$. Let $\mathcal {B(H)}$ denote the class of all bounded operators having the whole space $\mathcal  H$ as their domain and define the \emph{regular} set of $T$ by
\begin{gather*}
\rho(T)\colonequals\lrb{\zeta \in \C\ :\ (T-\zeta I)^{-1}\in\mathcal {B(H)}}\,,
\end{gather*}
It is not difficult to see that the regular set is an open subset of the quasi-regular set. Besides, it is convenient to take the regular set over closed relations, otherwise it is empty.

Let us consider the following spectral sets of a relation $T$.
\begin{align*}
\sigma(T)&\colonequals\C\backslash \rho(T), &\mbox{(spectrum)}\\
\hat\sigma(T)&\colonequals\C\backslash \hat\rho(T),&\mbox{(spectral core)}\\
\sigma_p(T)&\colonequals\{\zeta \in \C\, :\, \ker (T-\zeta I)\neq \{0\}\},&\mbox{(point spectrum)}\\
\sigma_d(T)&\colonequals\{\zeta \in \sigma_{p}(T)\, :\, \dim\ker (T-\zeta I)<\infty\},&\mbox{(discrete spectrum)}\\
\sigma_p^{\infty}(T)&\colonequals\{\zeta \in \sigma_{p}(T)\, :\, \dim\ker (T-\zeta I)=\infty\},&\mbox{(point non-discrete spectrum)}\\
\sigma_c(T)&\colonequals\{\zeta \in \C\, :\, \ran (T-\zeta I)\neq \overline{\ran (T-\zeta I)}\}.&\mbox{(continuous spectrum)}
\end{align*} 
The spectral core satisfies 
\begin{gather*}
\hat\sigma(T)=\sigma_p(T)\cup\sigma_c(T)\,.
\end{gather*}

We shall focus on the following decomposition, which will be practical in the sequel. For a closed relation $T$, we denote the \emph{multivalued part} and \emph{operator part} of $T$ by
\begin{gather}\label{eq:multivalued-operator.parts}
T_{\infty}\colonequals\lrb{\vE 0g\in T}\quad\mbox{and}\quad \oP T\colonequals T\ominus T_{\infty}\,,
\end{gather}
respectively.  The operators in \eqref{eq:multivalued-operator.parts} are closed and satisfy 
\begin{gather}\label{eq:decomposition-Tclosed}
T=\oP T\oplus T_\infty\,.
\end{gather}
The operator part $\oP T$ is a linear operator, while the multivalued part $T_\infty$ is a purely multivalued relation. The decomposition \eqref{eq:decomposition-Tclosed} allows studying spectral properties of $T$ by means of $\oP T$.

For two relations $T,S$ in $\HH$, we define $T_S$ as the relation in Hilbert space $\lrp{\mul S}^\perp\oplus\lrp{\mul S}^\perp$ (the orthogonal sum of two copies of $\lrp{\mul S}^\perp$, with inner product inherited from $\HH$) given by
\begin{gather}\label{eq:restrict-relation}
T_S\colonequals T\cap \lrp{\mul S}^\perp\oplus\lrp{\mul S}^\perp\,.
\end{gather}
The relations $T$ and $T_S$ are closed simultaneously. It is easy to verify that $(T^{-1})_T=(T_T)^{-1}$ and when $T$ is closed, $T_T=(\oP T)_T$, i.e., $T_T$ is a closed operator. Sometimes it is convenient to consider $T_S$ as a linear relation in $\HH$.

\begin{remark}\label{rmk:dom-multperp-spectral} If $T$ is a closed relation with $\dom T\subset\lrp{\mul T}^\perp$, then 
\begin{gather}\label{eq:Tt-ToP}
T_T=(\oP T)_T=\oP T\cap \lrp{\mul T}^\perp\oplus\lrp{\mul T}^\perp=\oP T\,,
\end{gather}
since $\ran \oP T\subset (\mul T)^\perp$. In addition, (cf.\cite[Thm.\,2.10]{MR4082306})
\begin{align}\label{eq:espectral-properties-inherited}
\begin{aligned}
\sigma(T)&=\sigma(T_T)\,,&\hat\sigma(T)&=\hat\sigma(T_T)\,,&\sigma_c(T)&=\sigma_c(T_T)\,,\\
\sigma_p(T)&=\sigma_p(T_T)\,,&\sigma_p^\infty(T)&=\sigma_p^\infty(T_T)\,,&\sigma_d(T)&=\sigma_d(T_T)\,.
\end{aligned}
\end{align}
\end{remark}

\section{Symmetric and semi-bounded linear relations}\label{s3}
We start this section with a brief discussion of symmetric relations.
\begin{definition}
A linear relation $A$ is said to be \emph{symmetric} if $A\subset A^*$ and \emph{selfadjoint} if $A=A^*$. A symmetric relation $A$ is \emph{maximal symmetric} if every symmetric extension $S$ of $A$ satisfies $S=A$.
\end{definition}
It is not difficult to show the following equivalences:
\begin{enumerate}[(i)]
\item $A$ is a symmetric relation.
\item $\ip fk=\ip gh$, for all $\vE fg,\vE hk\in A$.
\item $\ip fg\in\R$, for all $\vE fg\in A$.
\end{enumerate}

\begin{remark}\label{rmk:charc-symmetric-A}
There exists another characterization of symmetric relations in terms of the quasi-regular set. Namely, $A$ is symmetric if and only if $\C\backslash \R\subset \hat\rho(A)$ and
\begin{gather*}
\no{\lrp{A-\zeta I}^{-1}}\leq 1/\abs{{\rm Im\,} \zeta}\,,
\end{gather*}
for all $\zeta\in\C\backslash \R$ \cite[Rmk.\,3.2]{MR4082306}. In this fashion, $\hat\sigma(A)\subset\R$.
\end{remark}
 It is clear that a maximal symmetric relation is closed. Besides, a selfadjoint relation is maximal symmetric.
 \begin{remark}\label{rmk:Asym-AA}
For a symmetric relation $A$, it follows that $\dom A\subset(\mul A)^\perp$. Moreover, if $A$ is maximal symmetric then $\cc{\dom A}=(\mul A)^\perp$. The above is adapted from \cite[Lem.\,2.1]{MR3057107}. So, by virtue of Remark~\ref{rmk:dom-multperp-spectral}, if $A$ is a closed symmetric relation, then so is $A_A$, and $A$ satisfies the spectral properties \eqref{eq:espectral-properties-inherited}. 
 \end{remark}
 
It is convenient to consider the \emph{deficiency space} and \emph{deficiency index} (or index, for short) of a closed symmetric relation $A$, given by
 \begin{gather*}
 \dS_\zeta(A^*)\colonequals\lrb{\vE f{\zeta f}\in A^*}\quad\mbox{and}\quad\dim \dS_{\zeta}(A^*)\,, \quad \zeta\in\C
 \end{gather*}
 respectively. The index of $A$ is constant on each connected component of $\hat\rho(A)$. One has from Remark~\ref{rmk:charc-symmetric-A} that the connected components $C_+$ and $C_-$ (the upper and lower half-planes) are contained in $\hat\rho(A)$. So, the indices
\begin{gather}\label{eq:deficiency-index}
\dim \dS_{\zeta}(A^*)\,,\quad \dim \dS_{-\zeta}(A^*)\,,\quad \zeta\in\C_-
\end{gather}
remain constant over $\C_-$.

The following assertion can be found in \cite[Prop.\,4.1]{MR4091412}.
\begin{proposition}\label{prop:characterization-sa}
For  a closed symmetric relation $A$, the following are equivalent:
\begin{enumerate}[(i)]
\item $A$ is selfadjoint.
\item $\dim \dS_{\pm i}(A^*)=0$.
\item $\sigma(A)=\hat\sigma(A)\subset\R$.
\end{enumerate}
\end{proposition}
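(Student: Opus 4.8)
The plan is to prove the equivalence of the three characterizations of selfadjointness for a closed symmetric relation $A$ by establishing a cyclic chain of implications, using the deficiency-space formalism and the spectral results from Remark~\ref{rmk:charc-symmetric-A} together with Proposition~\ref{prop:existence-sbe}-style decomposition ideas. The central observation is that the deficiency space $\dS_\zeta(A^*)$ measures exactly the failure of $A-\zeta I$ to be surjective, so that controlling it at $\zeta=\pm i$ should suffice to control the whole nonreal part of the spectrum.

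\medskip

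First I would show (i) $\Rightarrow$ (ii). If $A=A^*$, then by Remark~\ref{rmk:charc-symmetric-A} we have $\C\backslash\R\subset\hat\rho(A)$ with the norm bound $\no{(A-\zeta I)^{-1}}\le 1/\abs{{\rm Im}\,\zeta}$. The key point is that for a selfadjoint relation the deficiency spaces at nonreal $\zeta$ must vanish: since $\dS_\zeta(A^*)=\ker(A^*-\zeta I)=\ker(A-\zeta I)$ when $A=A^*$, and a symmetric (hence here selfadjoint) relation can have no nonreal eigenvalues (from characterization (iii) of symmetry, $\ip fg\in\R$, one sees $\vE f{\zeta f}\in A$ forces $\zeta\no f^2\in\R$, so $f=0$ for $\zeta\notin\R$), we conclude $\dS_{\pm i}(A^*)=\{0\}$.

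\medskip

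Next, for (ii) $\Rightarrow$ (iii), I would use the decomposition \eqref{eq:Hdecomposition} applied to $A-\zeta I$. The identity $\ker(A^*-\cc\zeta I)=(\ran(A-\zeta I))^\perp$ (the last property of \eqref{eq:p-Adjoint} applied to the shifted relation) shows that $\dS_{\cc\zeta}(A^*)=\{0\}$ is equivalent to $\ran(A-\zeta I)$ being dense. Combined with the a priori bound from Remark~\ref{rmk:charc-symmetric-A}, density plus the closed-range consequence of that bound gives that $A-\zeta I$ is boundedly invertible with full range, i.e.\ $\zeta\in\rho(A)$, for $\zeta=\pm i$. The constancy of the index on the connected components $\C_+$ and $\C_-$ (noted after \eqref{eq:deficiency-index}) then propagates the vanishing of the deficiency index from the single points $\pm i$ to all of $\C\backslash\R$, so $\C\backslash\R\subset\rho(A)$ and hence $\sigma(A)\subset\R$; since $\hat\sigma(A)\subset\sigma(A)$ always and the two agree off $\R$ once the range is closed and full, we obtain $\sigma(A)=\hat\sigma(A)\subset\R$.

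\medskip

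Finally, (iii) $\Rightarrow$ (i) should follow by a squeezing argument: from $A\subset A^*$ (symmetry) I want to upgrade to equality. If $\sigma(A)=\hat\sigma(A)\subset\R$, then $i\in\rho(A)$, so $(A-iI)^{-1}\in\mathcal{B(H)}$ and $\ran(A-iI)=\mathcal H$. The main obstacle — and the step I would treat most carefully — is converting surjectivity of $A-iI$ into the inclusion $A^*\subset A$. The argument is that for any $\vE hk\in A^*$, surjectivity lets me pick $\vE fg\in A$ with $g-if=k-ih$; symmetry of $A$ and the adjoint relation then force $\vE hk=\vE fg\in A$ through the standard orthogonality computation $\ip{(k-ih)-(g-if)}{\cdot}$ against $\ran(A-iI)=\mathcal H$. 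This is the analogue of von Neumann's classical criterion transcribed to relations, and I expect the bookkeeping with the multivalued parts to be where care is needed; invoking Remark~\ref{rmk:Asym-AA} to reduce to the operator part $A_A$ via the spectral invariances \eqref{eq:espectral-properties-inherited} would let me borrow the operator-case result cleanly.
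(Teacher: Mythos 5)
The paper itself gives no proof of this proposition---it simply cites \cite[Prop.\,4.1]{MR4091412}---so your proposal has to be judged on its own terms. Its architecture (von Neumann's criterion transcribed to relations) is sound, and the ingredients you pick are the right ones: the identity $\ker(A^*-\cc\zeta I)=(\ran(A-\zeta I))^\perp$ obtained from \eqref{eq:p-Adjoint} together with $(A-\zeta I)^*=A^*-\cc\zeta I$ (Remark~\ref{rm:dist-adjoint}), the bound in Remark~\ref{rmk:charc-symmetric-A} which, for the \emph{closed} relation $A$, makes $\ran(A-\zeta I)$ closed, and the constancy of the index on connected components of $\hat\rho(A)$. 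The implication (i)$\Rightarrow$(ii) is correct as written.

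There is, however, a genuine gap in (ii)$\Rightarrow$(iii): the equality $\sigma(A)=\hat\sigma(A)$ is never actually established. Once you know $\hat\sigma(A)\subset\sigma(A)\subset\R$, your closing remark that ``the two agree off $\R$'' is vacuous---both sets are already contained in $\R$---and what is needed is agreement \emph{on} $\R$, i.e., that every $\zeta\in\hat\rho(A)\cap\R$ lies in $\rho(A)$. This does follow from the tools you already invoked, but the step must be made: if $\hat\rho(A)$ contains a real point, then (being open and containing $\C\backslash\R$) it consists of a single connected component---exactly the observation the paper records at the start of Section~\ref{s5}---so the index at that real $\zeta$ equals the index at $\pm i$, which is $0$; hence $\ran(A-\zeta I)$ is dense, and it is closed because $A$ is closed and $(A-\zeta I)^{-1}$ is bounded, giving $\zeta\in\rho(A)$. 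Without this, condition (iii) as stated is not proved.

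A smaller slip occurs in (iii)$\Rightarrow$(i): the orthogonality is taken against the wrong range. Having chosen $\vE fg\in A$ with $g-if=k-ih$, you get $\vE{h-f}{i(h-f)}\in A^*$, i.e., $h-f\in\ker(A^*-iI)=(\ran(A+iI))^\perp$; it is the surjectivity of $A+iI$ (that is, $-i\in\rho(A)$), not of $A-iI$, that forces $h=f$ and then $k=g$. So the argument needs both $\pm i\in\rho(A)$---which (iii) does supply---but testing against $\ran(A-iI)=\mathcal H$, as you wrote, proves nothing. Your alternative suggestion of reducing to the operator part $A_A$ via Remark~\ref{rmk:Asym-AA} and \eqref{eq:espectral-properties-inherited} is viable, but it silently requires the converse bookkeeping (that selfadjointness of $A_A$ in $(\mul A)^\perp$ lifts to selfadjointness of $A=A_A\oplus A_\infty$ in $\mathcal H$), which also deserves a line of proof.
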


\begin{remark}\label{rm:Aselfadjoint-restriction}
As a consequence of Proposition~\ref{prop:characterization-sa} and Remark~\ref{rmk:Asym-AA}, if $A$ is a selfadjoint relation in $\HH$ then $A_A$ is a selfadjoint operator in $\lrp{\mul A}^\perp\oplus\lrp{\mul A}^\perp$. In this fashion, we may consider the spectral measure $E_{A}$ of $A_A$, on the Borel $\sigma$-algebra $\mathcal  B(\R)$. Hence, it follows from \eqref{eq:Tt-ToP} that
\begin{gather*}
A=A_A\oplus A_\infty=\int_{\R}\lambda dE_A(\lambda)\oplus A_\infty\,.
\end{gather*}
\end{remark}

In what follows, we deal with a particular class of symmetric relations known as semi-bounded relations (cf. \cite[Sect.\,3.1]{MR2953553} for operators). 
\begin{definition}\label{def:lsb-usb}
A relation $A$ is \emph{lower semi-bounded} (l.s.b. for short) if there exists $m\in\R$ such that
\begin{gather*}
\ip fg\geq m\no f^2\,,\quad \mbox{for all } \vE fg\in A\,.
\end{gather*}
A relation $A$ is \emph{upper semi-bounded} (briefly u.s.b.) if $-A$ is l.s.b. If $A$ is l.s.b. or u.s.b. then $A$ is called semi-bounded.
\end{definition}

It is clear that A semi-bounded relation is symmetric. Besides, every l.s.b. (resp. u.s.b.) relation $A$ has $m_A$ as the greatest lower bound (resp. $M_A$ as the smallest upper bound), which is given
by 
\begin{align}\label{eq:GS-bound}
\begin{split}
m_A&\colonequals \inf\lrb{\ip fg\,:\,\vE fg\in A\,,\no f=1}\\
\Big(\mbox{resp. }\, M_A&\colonequals \sup\lrb{\ip fg\,:\,\vE fg\in A\,,\no f=1}\Big)\,.
\end{split}
\end{align}
Moreover, the closure of a semi-bounded relation is semi-bounded of the same type with the same bound~\eqref{eq:GS-bound}. Note that $m_{-A}=-M_A$.

\begin{theorem}\label{th:charact-lsb}
A relation $A$ is lower semi-bounded if and only if it is symmetric, there exists $m_A\in\R$ such that $(-\infty,m_A)\subset \hat\rho(A)$ and for all $\alpha<m_A$,
\begin{gather}\label{eq:bound-Alsb}
\no{(A-\alpha I)^{-1}}\leq (m_A-\alpha)^{-1}\,.
\end{gather}
\end{theorem}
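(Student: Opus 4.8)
The plan is to prove both implications by translating the defining inequality of lower semi-boundedness into the resolvent bound, and conversely, using nothing more than the Cauchy--Schwarz inequality together with the fact that $\ip fg\in\R$ for symmetric relations.

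For the forward implication I would start from an l.s.b.\ relation $A$ with greatest lower bound $m_A$, which lies in $\R$ by \eqref{eq:GS-bound}, and recall that $A$ is automatically symmetric. Fixing $\alpha<m_A$ and an arbitrary $\vE fg\in A$, so that $\vE{g-\alpha f}{f}\in(A-\alpha I)^{-1}$, the key computation is
\[
\ip f{g-\alpha f}=\ip fg-\alpha\no f^2\geq(m_A-\alpha)\no f^2\geq0,
\]
after which Cauchy--Schwarz gives $(m_A-\alpha)\no f^2\leq\no f\,\no{g-\alpha f}$, hence $\no f\leq(m_A-\alpha)^{-1}\no{g-\alpha f}$. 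Since $\vE fg$ was arbitrary, this says exactly that $(A-\alpha I)^{-1}$ is bounded with $\no{(A-\alpha I)^{-1}}\leq(m_A-\alpha)^{-1}$, so $\alpha\in\hat\rho(A)$ and $(-\infty,m_A)\subset\hat\rho(A)$, yielding both required properties.

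For the converse I would assume $A$ symmetric with the stated resolvent bound and recover $\ip fg\geq m_A\no f^2$. Fixing $\vE fg\in A$ and $\alpha<m_A$, the bound gives $(m_A-\alpha)\no f\leq\no{g-\alpha f}$; squaring and expanding $\no{g-\alpha f}^2=\no g^2-2\alpha\ip fg+\alpha^2\no f^2$ (using $\ip fg\in\R$, valid because $A$ is symmetric) and cancelling the common $\alpha^2\no f^2$ terms reduces everything to the single inequality
\[
2\alpha\bigl(\ip fg-m_A\no f^2\bigr)\leq\no g^2-m_A^2\no f^2,
\]
valid for every $\alpha<m_A$. The right-hand side is a fixed constant, so I would let $\alpha\to-\infty$: if the coefficient $\ip fg-m_A\no f^2$ were negative, the left-hand side would tend to $+\infty$, a contradiction, forcing $\ip fg\geq m_A\no f^2$ and hence lower semi-boundedness with bound $m_A$.

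Both directions are elementary once set up; the only points requiring care are the bookkeeping of the inverse relation $(A-\alpha I)^{-1}$ (identifying which component is the ``argument'' and which the ``value'' when reading off the norm bound) and the use of symmetry to guarantee $\ip fg\in\R$, which is what makes the cross terms combine cleanly. I expect the limiting argument $\alpha\to-\infty$ in the converse to be the conceptual crux: it is what upgrades a whole family of resolvent estimates into the pointwise quadratic-form inequality, and it is the step that genuinely uses that the bound is available on the entire half-line $(-\infty,m_A)$ rather than at a single point.
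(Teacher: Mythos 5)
Your proof is correct and takes essentially the same route as the paper's: the forward direction applies the semi-boundedness inequality plus Cauchy--Schwarz to elements of $(A-\alpha I)^{-1}$, and the converse squares the resolvent bound, expands, cancels the $\alpha^2$ terms, and lets $\alpha\to-\infty$ to recover the quadratic-form inequality. The only differences are cosmetic: how the elements of the inverse relation are labelled, and your contradiction phrasing of the limit versus the paper's direct passage to the limit.
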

\begin{proof}
We first suppose that $A$ is l.s.b., then it is symmetric. Let $\alpha<m_A$, with $m_A$ as in \eqref{eq:GS-bound}, and $\vE hk\in(A-\alpha I)^{-1}$, i.e, $\vE k{h+\alpha k}\in A$. Since $A$ is l.s.b.,
\begin{gather*}
\ip kh+\alpha\no k^2=\ip k{h+\alpha k}\geq m_A \no k^2\,,
\end{gather*}
In this fashion, for $k\neq 0$ (otherwise, the following inequality is direct),
\begin{gather*}
\no k=\frac1{(m_A-\alpha)\no k}\ip kh\leq\frac1{m_A-\alpha}\no h\,,
\end{gather*}
whence it follows \eqref{eq:bound-Alsb} and $(-\infty,m_A)\subset \hat\rho(A)$. Conversely, consider a negative real number $\alpha<m_A$ and let $\vE fg\in A$. Then, $\vE{g-\alpha f}{f}\in (A-\alpha I)^{-1}$ and by \eqref{eq:bound-Alsb}, since $A$ is symmetric,
\begin{gather*}
\no f^2(m_A-\alpha)^2\leq \no{g-\alpha f}^2=\no g^2+\alpha^2\no f^2-2\alpha\ip fg\,,
\end{gather*}
wherefrom, 
\begin{gather*}
\no f^2\lrp{m_A-\frac{m_A^2}{2\alpha}}+\frac 1{2\alpha}\no g\leq\ip fg\,.
\end{gather*}
Thus, letting $\alpha$ tends to minus infinity, $\ip fg\geq m_A\no f^2$. Hence, $A$ is l.s.b.
\end{proof}
The following assertion is straightforward from Theorem~\ref{th:charact-lsb}.
\begin{corollary}\label{cor:charact-usb}
A relation $A$ is upper semi-bounded if and only if it is symmetric, there exists $M_A\in\R$ such that $(M_A,\infty) \subset \hat\rho(A)$ and for all $\alpha>M_A$,
\begin{gather*}
\no{(A-\alpha I)^{-1}}\leq (\alpha-M_A)^{-1}\,.
\end{gather*}
\end{corollary}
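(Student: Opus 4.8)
The plan is to derive Corollary~\ref{cor:charact-usb} directly from Theorem~\ref{th:charact-lsb} applied to the relation $-A$, since by Definition~\ref{def:lsb-usb} a relation $A$ is upper semi-bounded precisely when $-A$ is lower semi-bounded. The whole argument is then a matter of translating each ingredient of Theorem~\ref{th:charact-lsb} under the substitution $A\mapsto -A$, so no new analytic input beyond that theorem is needed.

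First I would assemble the elementary dictionary relating $A$ and $-A$. Symmetry is preserved, that is, $A$ is symmetric if and only if $-A$ is symmetric, and when $A$ is upper semi-bounded one has $m_{-A}=-M_A$, as already noted after \eqref{eq:GS-bound}. Next, from the definition of the quasi-regular set together with $-A-\zeta I=-(A+\zeta I)$, I would verify that $\hat\rho(-A)=-\hat\rho(A)$, so that the interval $(-\infty,-M_A)$ sits inside $\hat\rho(-A)$ exactly when $(M_A,\infty)$ sits inside $\hat\rho(A)$. Finally I would record the norm identity $\no{(-B)^{-1}}=\no{B^{-1}}$, valid for any relation $B$; this holds because passing from $B$ to $-B$ merely negates one component of each pair in the graph and hence leaves the norms of both components, and thus the bound relating them, unchanged.

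With this dictionary in hand, the forward implication proceeds as follows. If $A$ is upper semi-bounded, then $-A$ is lower semi-bounded with greatest lower bound $m_{-A}=-M_A$, so Theorem~\ref{th:charact-lsb} gives $(-\infty,-M_A)\subset\hat\rho(-A)$ together with $\no{(-A-\beta I)^{-1}}\leq(-M_A-\beta)^{-1}$ for every $\beta<-M_A$. Setting $\beta=-\alpha$ with $\alpha>M_A$ and invoking $-A+\alpha I=-(A-\alpha I)$ together with the norm identity, I obtain $\no{(A-\alpha I)^{-1}}=\no{(-A-\beta I)^{-1}}\leq(\alpha-M_A)^{-1}$, which is the claimed bound, while the inclusion translates to $(M_A,\infty)\subset\hat\rho(A)$. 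The converse is the mirror image: assuming $A$ symmetric with the stated inclusion and bound, the same substitution $\beta=-\alpha$ shows that $-A$ satisfies the hypotheses of the ``if'' direction of Theorem~\ref{th:charact-lsb} with $m_{-A}=-M_A$, whence $-A$ is lower semi-bounded and $A$ is upper semi-bounded. I do not expect any genuine obstacle here; the only points demanding care are the sign bookkeeping and the verification of the norm identity $\no{(-B)^{-1}}=\no{B^{-1}}$, both of which are routine.
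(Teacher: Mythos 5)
Your proposal is correct and is exactly the paper's approach: the paper dismisses the corollary as "straightforward from Theorem~\ref{th:charact-lsb}", and the intended reduction is precisely your substitution $A\mapsto -A$ using $m_{-A}=-M_A$, the identity $\hat\rho(-A)=-\hat\rho(A)$, and the norm equality $\no{(-B)^{-1}}=\no{B^{-1}}$. Your write-up simply makes explicit the sign bookkeeping the paper leaves to the reader.
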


Let us turn our attention to the restriction \eqref{eq:restrict-relation}.

\begin{lemma}\label{lem:AvsAA-semi-bounded}
A closed symmetric relation $A$ is l.s.b. (u.s.b.) if and only if $A_A$ is l.s.b. (u.s.b.), with the same greatest lower (smallest upper) bound.
\end{lemma}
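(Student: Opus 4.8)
The plan is to reduce everything to the orthogonal decomposition $A=\oP A\oplus A_\infty$ of \eqref{eq:decomposition-Tclosed} and to exploit that, by \eqref{eq:Tt-ToP} in Remark~\ref{rmk:dom-multperp-spectral}, one has $A_A=\oP A$ (viewed as a relation in $\HH$), because the closed symmetric relation $A$ satisfies $\dom A\subset(\mul A)^\perp$ by Remark~\ref{rmk:Asym-AA}. The heart of the argument is a single identity: for every $\vE fg\in A$ the quadratic-form value $\ip fg$ is unchanged upon replacing $g$ by its component in $(\mul A)^\perp$. Recall also that $A_A$ is a closed symmetric operator by Remark~\ref{rmk:Asym-AA} and the remark following \eqref{eq:restrict-relation}, so that asking whether $A_A$ is semi-bounded is meaningful.

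First I would fix $\vE fg\in A$ and split $g=g_\circ+g_\infty$ along the orthogonal decomposition $\mathcal H=(\mul A)^\perp\oplus\mul A$, where $\mul A$ is closed because $A$ is. Since $g_\infty\in\mul A$ gives $\vE 0{g_\infty}\in A$, linearity yields $\vE f{g_\circ}=\vE fg-\vE 0{g_\infty}\in A$, and as $g_\circ\in(\mul A)^\perp$ this pair lies in $A_A=\oP A$. Because $f\in\dom A\subset(\mul A)^\perp$ is orthogonal to $g_\infty\in\mul A$, I obtain $\ip fg=\ip f{g_\circ}+\ip f{g_\infty}=\ip f{g_\circ}$. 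This is the one computation on which the whole lemma rests.

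With this in hand both implications become transparent. If $A$ is l.s.b., then since $A_A\subset A$, every $\vE f{g_\circ}\in A_A$ satisfies $\ip f{g_\circ}\geq m_A\no f^2$, so $A_A$ is l.s.b.; conversely, if $A_A$ is l.s.b. with bound $m_{A_A}$, the identity gives $\ip fg=\ip f{g_\circ}\geq m_{A_A}\no f^2$ for every $\vE fg\in A$, so $A$ is l.s.b. For the equality of bounds I would note that, since $f\in(\mul A)^\perp$ makes the norms in $\HH$ and in $(\mul A)^\perp\oplus(\mul A)^\perp$ coincide, the two sets of values
\[
\lrb{\ip fg\,:\,\vE fg\in A,\ \no f=1}\quad\text{and}\quad\lrb{\ip f{g_\circ}\,:\,\vE f{g_\circ}\in A_A,\ \no f=1}
\]
are literally the same: the identity maps the first into the second, while $A_A\subset A$ maps the second into the first. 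Hence the infima defining $m_A$ and $m_{A_A}$ in \eqref{eq:GS-bound} agree.

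Finally, the u.s.b. case follows without extra work by applying the l.s.b. statement to $-A$, after checking that $\mul(-A)=\mul A$ forces $(-A)_{-A}=-(A_A)$ and recalling $m_{-A}=-M_A$. The only point demanding care — the obstacle, such as it is — is the bookkeeping around ambient spaces and norms: one must keep straight that $A_A$ is genuinely symmetric and semi-bounded as a relation, and that passing between $\HH$ and $(\mul A)^\perp\oplus(\mul A)^\perp$ alters neither $\ip fg$ nor $\no f$, which is precisely what $\dom A\subset(\mul A)^\perp$ secures.
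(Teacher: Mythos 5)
Your proof is correct and follows essentially the same route as the paper: both rest on the single identity $\ip fg=\ip f{A_Af}$ obtained by splitting $g$ along $\mathcal H=(\mul A)^\perp\oplus\mul A$ and using $\dom A\subset(\mul A)^\perp$ from Remark~\ref{rmk:Asym-AA}. Your write-up merely makes explicit some bookkeeping (the coincidence of the two sets of form values, and the reduction of the u.s.b.\ case to $-A$) that the paper leaves implicit.
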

\begin{proof}
From \eqref{eq:Tt-ToP} and Remark~\ref{rmk:Asym-AA}, one has that $\dom A\subset (\mul A)^\perp$ and for $\vE fg\in A$ there exists $h\in\mul A$ such that $g=A_Af+h$. Thus, 
\begin{gather*}
\ip{f}{g}=\ip{f}{A_Af+h}=\ip f{A_Af}\,,
\end{gather*}which implies the assertion. 
\end{proof}

It is well-known that a selfadjoint operator $T$ in $\mathcal  H$ has a bounded spectrum if and only if it belongs to $\mathcal  {B(H)}$. In this case,
\begin{gather}\label{eq:bound-T-Selfadjoint}
\no T=\sup\lrb{\abs{\ip f{Tf}}\,:\, f\in \mathcal  H\,,\no f=1}\,.
\end{gather}

\begin{theorem}\label{th:Alusb-oP}
For a selfadjoint relation $A$, the following are equivalent:
\begin{enumerate}[(i)]
\item\label{it1-lusb} $A$ is both a lower and upper semi-bounded relation.
\item\label{it2-lusb} $\sigma(A)$ is a bounded subset of $\R$.
\item\label{it3-lusb} $A_A$ is a selfadjoint operator in $\mathcal  B((\mul A)^\perp)$.
\end{enumerate}
In such a case, the greatest lower and smallest upper bounds \eqref{eq:GS-bound} satisfy
\begin{gather}\label{eq:bound-AoP-Restrict}
\sigma(A)\subset[m_A,M_A]\quad\mbox{and}\quad\no{A_A}=\max\lrb{\abs{m_A},\abs{M_A}}\,.
\end{gather}
\end{theorem}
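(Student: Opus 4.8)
The plan is to reduce the entire statement to a fact about the operator part $A_A$ and then invoke the classical spectral theory of selfadjoint operators. Since $A$ is selfadjoint, Remark~\ref{rm:Aselfadjoint-restriction} tells me that $A_A$ is a genuine selfadjoint operator in $(\mul A)^\perp\oplus(\mul A)^\perp$, which by \eqref{eq:Tt-ToP} I may regard as $\oP A$ acting in $(\mul A)^\perp$ (densely, by Remark~\ref{rmk:Asym-AA}). Two transfer principles then carry all three conditions down to $A_A$: the spectrum is inherited, $\sigma(A)=\sigma(A_A)$ by \eqref{eq:espectral-properties-inherited}, and semi-boundedness together with the bounds is inherited by Lemma~\ref{lem:AvsAA-semi-bounded}. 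I would also note at the outset that Proposition~\ref{prop:characterization-sa} already yields $\sigma(A)\subset\R$, so in $(\ref{it2-lusb})$ only the boundedness of $\sigma(A)$ is really at issue. The whole proof is then a cycle of short implications.

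For $(\ref{it1-lusb})\Rightarrow(\ref{it3-lusb})$ I would argue as follows: if $A$ is both l.s.b. and u.s.b., then so is $A_A$ with the same bounds, giving $m_A\no f^2\leq\ip f{A_Af}\leq M_A\no f^2$ for every $f$ in its domain. As $A_A$ is selfadjoint, its spectrum lies in the closure of its numerical range, so $\sigma(A_A)\subset[m_A,M_A]$ is bounded; by the quoted fact that a selfadjoint operator has bounded spectrum precisely when it is everywhere defined and bounded, this forces $A_A\in\mathcal B((\mul A)^\perp)$. The equivalence $(\ref{it3-lusb})\Leftrightarrow(\ref{it2-lusb})$ is then immediate from that same fact combined with $\sigma(A_A)=\sigma(A)$.

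To close the cycle with $(\ref{it3-lusb})\Rightarrow(\ref{it1-lusb})$, I would use \eqref{eq:bound-T-Selfadjoint}: if $A_A\in\mathcal B((\mul A)^\perp)$ then $\abs{\ip f{A_Af}}\leq\no{A_A}\no f^2$, so $A_A$, and hence $A$ by Lemma~\ref{lem:AvsAA-semi-bounded}, is simultaneously l.s.b. and u.s.b. For the concluding identities I would appeal to the spectral measure $E_A$ of $A_A$ furnished by Remark~\ref{rm:Aselfadjoint-restriction}: for a bounded selfadjoint operator the endpoints of the spectrum coincide with the infimum and supremum of the numerical range, so that $m_A=\min\sigma(A_A)$ and $M_A=\max\sigma(A_A)$, which gives $\sigma(A)=\sigma(A_A)\subset[m_A,M_A]$. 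Since the numerical range of $A_A$ fills exactly $[m_A,M_A]$, the norm formula $\no{A_A}=\max\{\abs{m_A},\abs{M_A}\}$ drops out of \eqref{eq:bound-T-Selfadjoint}.

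The genuine obstacle here is not any single estimate but making the reduction to $A_A$ airtight: one must verify through Remarks~\ref{rmk:Asym-AA} and \ref{rm:Aselfadjoint-restriction} that $A_A$ is truly a densely defined selfadjoint \emph{operator} in $(\mul A)^\perp$, since only then may the classical spectral-theoretic facts — the numerical range controlling both spectrum and norm, as encoded in \eqref{eq:bound-T-Selfadjoint} — be applied verbatim rather than reproved for relations. Once that bridge is established, each implication and both final identities are routine.
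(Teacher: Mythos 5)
Your proof is correct, but it routes the key implication differently from the paper. The paper proves \eqref{it1-lusb}$\Rightarrow$\eqref{it2-lusb} entirely at the relation level: since $A$ is selfadjoint, Proposition~\ref{prop:characterization-sa} gives $\sigma(A)=\hat\sigma(A)\subset\R$, and Theorem~\ref{th:charact-lsb} together with Corollary~\ref{cor:charact-usb} place $(-\infty,m_A)$ and $(M_A,\infty)$ inside $\hat\rho(A)$, so $\sigma(A)\subset[m_A,M_A]$ (the first half of \eqref{eq:bound-AoP-Restrict}) falls out at once, with no passage to $A_A$. You instead prove \eqref{it1-lusb}$\Rightarrow$\eqref{it3-lusb} by transferring the bounds to $A_A$ via Lemma~\ref{lem:AvsAA-semi-bounded} and then invoking the classical fact that the spectrum of a densely defined selfadjoint operator lies in the closure of its numerical range; that fact is standard and your use of it is legitimate, but the paper never states it --- Theorem~\ref{th:charact-lsb} and Corollary~\ref{cor:charact-usb} are precisely its relation-level substitutes, and your route leaves them unused. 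The remaining steps coincide: \eqref{it3-lusb}$\Rightarrow$\eqref{it1-lusb} via \eqref{eq:bound-T-Selfadjoint} and Lemma~\ref{lem:AvsAA-semi-bounded}, and the norm identity from \eqref{eq:GS-bound} and \eqref{eq:bound-T-Selfadjoint}, are the same in both proofs, as is the reduction through Remarks~\ref{rmk:Asym-AA} and \ref{rm:Aselfadjoint-restriction}, which you correctly identify as the point needing care. What the paper's route buys is self-containedness (only its own theorems plus the two stated facts about bounded selfadjoint operators are used) and it obtains $\sigma(A)\subset[m_A,M_A]$ for free in the first implication; what yours buys is uniformity, since every implication becomes classical operator theory once the reduction to $A_A$ is in place. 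One cosmetic slip: the numerical range of $A_A$ need not ``fill exactly'' $[m_A,M_A]$ --- it is convex and dense in that interval but may omit the endpoints; your norm argument only needs that its infimum and supremum equal $m_A$ and $M_A$, which Lemma~\ref{lem:AvsAA-semi-bounded} and \eqref{eq:GS-bound} supply, so the conclusion stands.
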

\begin{proof}
\eqref{it1-lusb}$\Rightarrow$\eqref{it2-lusb} If $A$ is both lower and upper semi-bounded, then Proposition~\ref{prop:characterization-sa}, Theorem~\ref{th:charact-lsb} and Corollary~\ref{cor:charact-usb} yield the left-hand side of \eqref{eq:bound-AoP-Restrict}, as required. \eqref{it2-lusb}$\Rightarrow$\eqref{it3-lusb} Since $A$ has a bounded spectrum, it follows from Remarks~\ref{rmk:Asym-AA} and \ref{rm:Aselfadjoint-restriction} that $A_A$ is a selfadjoint operator with bounded spectrum, viz. $A_A\in\mathcal  B((\mul A)^\perp)$. \eqref{it3-lusb}$\Rightarrow$\eqref{it1-lusb} One gets by \eqref{eq:bound-T-Selfadjoint} that $-\no{A_A}\no f^2\leq \ip{f}{A_A f}\leq \no {A_A}\no f^2$, for every $f\in\mathcal  H$. Hence, $A_A$ is both l.s.b. and u.s.b, and so is $A$, due to Lemma~\ref{lem:AvsAA-semi-bounded}. The right-hand side of \eqref{eq:bound-AoP-Restrict} follows by \eqref{eq:GS-bound}, \eqref{eq:bound-T-Selfadjoint} and Lemma~\ref{lem:AvsAA-semi-bounded}, bearing in mind that $-\inf\ip{f}{A_Af}=\sup-\ip{f}{A_Af}$.
\end{proof}

From Theorem~\ref{th:Alusb-oP}, the fact that a selfadjoint $A$ is both l.s.b. and u.s.b. means that its operator part is bounded, even if $A$ is not.

\section{The Krein transform for linear relations}\label{s4}

We shall work with the following version of the Krein transform for linear relations.
\begin{definition}
The \emph{Krein transform} of a linear relation $T$ is given by 
\begin{gather}\label{eq:def-KT}
\kT T\colonequals 2\lrp{T+I}^{-1}-I=\lrb{\vE{f+g}{f-g}\,:\,\vE fg\in T}\,,
\end{gather}
which is a linear relation with
\begin{equation}\label{eq:Kt-components}
\begin{aligned}
 \dom \kT T&=\ran (T+I)\,,& \ran \kT  T&=\ran (T-I)\,,
 \\ \ker \kT  T&=\ker (T-I)\,,&\mul \kT T&=\ker (T+ I)\,.
\end{aligned}
\end{equation}
\end{definition}

For relations $T$ and $S$, it is a simple matter to verify from \eqref{eq:def-KT} that the Krein transform satisfies the following properties:
\begin{equation}\label{eq:Kt-properties}
\begin{aligned}
& \kT {\kT T}=T\,,&\qquad  &T\subset S\ \Leftrightarrow\ \kT T\subset \kT S\,,\\
&\kT {T^{-1}}=-\kT T\,,&\qquad &\kT {T\dotplus S}=\kT T\dotplus \kT S\,,\\
&\kT {T}^{-1}=\kT {-T}\,,&\qquad&\kT{T\oplus S}={\kT T}\oplus{\kT S}\,.
\end{aligned}
\end{equation}

The first property of \eqref{eq:Kt-properties} means that the Krein transform is an involution.
\begin{proposition}\label{prop:kt-adjoint}
If $T$ is a linear relation, then the following holds:
\begin{enumerate}[(i)]
\item\label{it1-kt} $\kT {T^\perp}=\kT T^\perp$.
\item\label{it2-kt} $\kT{T^*}=\kT T^{*}$.
\item\label{it3-kt} $\cc{\kT T}=\kT {\cc T}$.
\end{enumerate}
\end{proposition}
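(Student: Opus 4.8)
The plan is to establish \eqref{prop:kt-adjoint}(i) by a direct argument and then deduce (ii) and (iii) formally, using only (i) together with the adjoint identities \eqref{eq:p-Adjoint} and the elementary properties \eqref{eq:Kt-properties}. The conceptual key is that the Krein transform \eqref{eq:def-KT} is the restriction to relations of a single bounded operator on $\HH$: the assignment $\vE fg\mapsto\vE{f+g}{f-g}$ is $U=\left(\begin{smallmatrix} I & I\\ I & -I\end{smallmatrix}\right)$, which satisfies $U^*=U$ and $U^2=2I$. Since $\no{U\vE fg}^2=\no{f+g}^2+\no{f-g}^2=2\no{\vE fg}^2$, the operator $W\colonequals 2^{-1/2}U$ is a self-adjoint unitary; and because a linear set is invariant under multiplication by a nonzero scalar, one has $\kT T=UT=WT$ for every relation $T$, where $UT$ denotes the image of $T$ under $U$.

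For (i) I would use that a unitary commutes with orthogonal complementation, $(WM)^\perp=W(M^\perp)$ for every $M\subset\HH$, whence $\kT T^\perp=(WT)^\perp=W(T^\perp)=\kT{T^\perp}$. To keep the argument self-contained I would verify this by hand instead: writing an arbitrary $\vE hk\in\HH$ as $\vE hk=\kT{\vE pq}$ with $p=\tfrac12(h+k)$ and $q=\tfrac12(h-k)$ — a bijection of $\HH$ because $\kT{\cdot}$ is an involution — one expands $\ip h{f+g}+\ip k{f-g}=2\bigl(\ip pf+\ip qg\bigr)$, so the condition $\vE hk\perp\kT T$ is literally the condition $\vE pq\perp T$. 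Hence $\vE hk\in\kT T^\perp\iff\vE pq\in T^\perp\iff\vE hk\in\kT{T^\perp}$, which is (i).

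Part (ii) then follows formally. Starting from $T^*=(-T^{-1})^\perp$ in \eqref{eq:p-Adjoint} and applying $\kT{\cdot}$, part (i) gives $\kT{T^*}=\kT{(-T^{-1})^\perp}=\bigl(\kT{-T^{-1}}\bigr)^\perp$. The properties \eqref{eq:Kt-properties} simplify the inner term: $\kT{-T^{-1}}=\bigl(\kT{T^{-1}}\bigr)^{-1}=\bigl(-\kT T\bigr)^{-1}=-\bigl(\kT T\bigr)^{-1}$, using first $\kT{-S}=(\kT S)^{-1}$ with $S=T^{-1}$, then $\kT{T^{-1}}=-\kT T$, and the elementary identity $(-R)^{-1}=-(R^{-1})$ (valid for linear sets, again by scale invariance). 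Thus $\kT{T^*}=\bigl(-(\kT T)^{-1}\bigr)^\perp$, which by \eqref{eq:p-Adjoint} applied to the relation $\kT T$ is precisely $\kT T^{*}$. For (iii) I would use $T^{**}=\cc T$ from \eqref{eq:p-Adjoint} and apply (ii) twice: $\kT{\cc T}=\kT{T^{**}}=\bigl(\kT{T^{*}}\bigr)^{*}=\bigl(\kT T^{*}\bigr)^{*}=\kT T^{**}=\cc{\kT T}$.

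The only genuinely computational step is (i); everything after it is bookkeeping with \eqref{eq:p-Adjoint} and \eqref{eq:Kt-properties}. The point to watch — and the reason the normalisation in \eqref{eq:def-KT} matters — is the scale invariance of linear sets: the transform corresponds to the non-unitary $U$ with $U^2=2I$, and it is only because multiplying a linear set by a nonzero scalar returns the same set that the factors $2^{\pm1/2}$ and the sign in $(-R)^{-1}=-(R^{-1})$ may be discarded. Keeping track of this is the main (minor) obstacle; once (i) is secured, (ii) and (iii) are immediate.
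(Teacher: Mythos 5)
Your proposal is correct, and parts (ii) and (iii) coincide with the paper's own argument: the paper likewise writes $\kT{T^*}=\kT{\lrp{-T^{-1}}^\perp}=\lrp{-\kT{T}^{-1}}^\perp=\kT{T}^*$ and deduces (iii) from $T^{**}=\cc T$. The genuine difference is in part (i). The paper never passes to an operator on $\HH$; it extracts the one-sided inclusion $\kT{T^\perp}\subset\kT{T}^\perp$ from the orthogonal-sum property in \eqref{eq:Kt-properties}, via $\kT{T^\perp}\oplus\kT{T}=\kT{T^\perp\oplus T}\subset\HH$, and then upgrades the inclusion to an equality by a purely formal sandwich using the involution and monotonicity: $T^\perp=\kT{\kT{T^\perp}}\subset\kT{\kT{T}^\perp}\subset\kT{\kT{T}}^\perp=T^\perp$. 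You instead identify $\kT T$ with the image of the set $T$ under the vector map $\vE fg\mapsto\vE{f+g}{f-g}$, note that after the harmless normalization $2^{-1/2}$ this map is a self-adjoint unitary on $\HH$ (and that linear sets absorb nonzero scalars), and conclude that it commutes with orthogonal complementation; your hand computation $\ip h{f+g}+\ip k{f-g}=2\lrp{\ip pf+\ip qg}$ delivers both inclusions simultaneously. What your route buys is the structural explanation---the Krein transform is, up to a scalar that linear sets swallow, a unitary change of coordinates---plus equality in one stroke; what the paper's route buys is that it stays entirely inside the formal calculus \eqref{eq:Kt-properties}, although that calculus hides essentially the same orthogonality computation inside the listed (but unproved) property $\kT{T\oplus S}={\kT T}\oplus{\kT S}$, so the two proofs are comparable in total work. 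Your cautionary remarks about scale invariance, including $(-R)^{-1}=-\lrp{R^{-1}}$ for linear sets, address exactly the point that makes the non-unitary normalization of \eqref{eq:def-KT} irrelevant, and are handled correctly.
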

\begin{proof}
Note that $\kT{T^\perp}\oplus\kT{T}=\kT{T^\perp\oplus T}\subset \HH$, i.e., $\kT{T^\perp}\subset\kT{T}^\perp$. In this fashion,
\begin{gather*}
T^\perp=\kT{\kT{T^\perp}}\subset\kT{\kT{T}^\perp}\subset \kT{\kT{T}}^\perp=T^\perp\,,
\end{gather*}
whence it follows that $\kT {T^\perp}=\kT T^\perp$. Besides, 
\begin{gather*}
\kT{T^*}=\kT{\lrp{-T^{-1}}^\perp}=\lrp{-\kT{T}^{-1}}^\perp=\kT{T}^*\,.
\end{gather*}
Furthermore, $\kT{\cc T}=\kT{T^{**}}=\kT{T}^{**}=\cc{\kT T}$, as required.
\end{proof}
\begin{remark}
It is evident from \eqref{eq:Kt-properties} and Proposition~\ref{prop:kt-adjoint} that a relation and its Krein transform are symmetric or selfadjoint simultaneously. Viz. the property of being a symmetric or selfadjoint relation is invariant under the Krein transform. 
\end{remark}

Let us deal with a particular class of semi-bounded relations.
\begin{definition}
We say that a relation $A$ is positive (briefly $A\geq0$) if 
\begin{gather}\label{eq:positive}
\ip fg\geq0\,,\quad \mbox{for all}\quad \vE fg\in A\,.
\end{gather}
\end{definition}

\begin{remark}
A positive relation is l.s.b., with greatest lower bound $m_A\geq0$. Moreover, if $A$ is l.s.b., then $A-m_A I\geq 0$.
\end{remark}

Recall that a relation $V$ is a contraction if 
\begin{gather}\label{eq:contraction}
\no k\leq \no h\,,\quad\mbox{for all}\quad\vE hk\in V\,.
\end{gather}
Besides, $V$ is isometric if the equality in \eqref{eq:contraction} holds, which is equivalent to $V^{-1}\subset V^*$.

\begin{theorem}\label{th:positive-Kt-contraction}
A linear relation $A$ is positive if and only if $\kT A$ is a symmetric contraction.
\end{theorem}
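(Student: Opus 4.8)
The plan is to reduce both implications to a single polarization-type identity obtained by applying the Krein transform to a generic pair of $A$. Given $\vE fg\in A$, the defining formula \eqref{eq:def-KT} produces the element $\vE{f+g}{f-g}\in\kT A$, and I would compute directly
\begin{gather*}
\no{f+g}^2-\no{f-g}^2=4\,{\rm Re\,}\ip fg\,.
\end{gather*}
This identity carries the whole equivalence: the contraction inequality \eqref{eq:contraction} for the transformed pair is exactly the statement ${\rm Re\,}\ip fg\geq0$.

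For the forward direction, I would first observe that a positive relation is symmetric, since \eqref{eq:positive} forces $\ip fg\in\R$ for all $\vE fg\in A$, which is characterization (iii) of symmetry. By the Remark following Proposition~\ref{prop:kt-adjoint}, the property of being symmetric is invariant under $\kT{\cdot}$, so $\kT A$ is symmetric. Since $A$ is positive, $\ip fg\geq0$, and the identity above yields $\no{f-g}\leq\no{f+g}$ for the element $\vE{f+g}{f-g}$ of $\kT A$; as every element of $\kT A$ arises in this way, $\kT A$ is a contraction.

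For the converse, I would use that the Krein transform is an involution (the first property of \eqref{eq:Kt-properties}), so that $A=\kT{\kT A}$, and invoke once more the invariance of symmetry to conclude that $A$ is symmetric whenever $\kT A$ is; hence $\ip fg\in\R$ for all $\vE fg\in A$. Applying the contraction hypothesis on $\kT A$ to $\vE{f+g}{f-g}$ gives ${\rm Re\,}\ip fg\geq0$ through the same identity, and since $\ip fg$ is already real this is precisely $\ip fg\geq0$. Thus $A$ is positive.

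The only genuine subtlety is that the contraction condition by itself controls only ${\rm Re\,}\ip fg$; promoting this to the full inequality $\ip fg\geq0$ requires reality of $\ip fg$, which is exactly why the symmetry hypothesis — built into the phrase \emph{symmetric contraction} — is essential and must be transferred back and forth via the invariance of symmetry under $\kT{\cdot}$.
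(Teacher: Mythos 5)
Your proposal is correct and takes essentially the same route as the paper: your forward direction is the paper's computation \eqref{eq:positive-Kcontraction} verbatim, and your converse---contraction of $\kT A$ gives ${\rm Re\,}\ip fg\geq 0$, while symmetry invariance under the involution gives $\ip fg\in\R$---is the paper's computation $\ip{h+k}{h-k}=\no h^2-\no k^2\geq 0$ read from the other side of the transform, using reality of $\ip hk$ to cancel the cross terms. There is no gap.
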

\begin{proof}
For $A\geq0$, it is clear that $\kT A$ is symmetric. Besides, for $\vE{f+g}{f-g}\in \kT A$  with $\vE fg\in A$, it follows that  $\ip fg\geq0$ and  
   \begin{align}\label{eq:positive-Kcontraction}
   \begin{split}
   \no {f-g}^2 &=\no g^{2}+\no f^{2}-2\ip fg\\
   &\leq \no g^{2}+\no f^{2}+2\ip fg= \no {f+g}^2\,.
   \end{split}
  \end{align}
Therefore, $\kT A$ is a contraction. Conversely, if  $\vE{h+k}{h-k}\in \kT {\kT A}=A$, with  $\vE hk\in \kT A$, then since $\kT A$ is a symmetric contraction, $\ip hk\in\R$,  $\no h\geq \no k$ and  
\begin{gather}\label{eq:contraction-Kpositive}
 \ip{h+k}{h-k}=\no h^2-\no k^2\geq 0\,.
\end{gather}
Hence, $A$ is positive.
\end{proof}
As a consequence of Theorem~\ref{th:positive-Kt-contraction}, the Krein transform gives a one-to-one correspondence between positive relations and symmetric contractions.

\begin{corollary}\label{coro:norm1-inposi}
If $A$ is a positive relation such that it is unbounded or has a nontrivial kernel, then $\kT A$ is a symmetric contraction with norm equal to one.  
\end{corollary}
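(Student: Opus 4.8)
The plan is to lean on what is already proved: by Theorem~\ref{th:positive-Kt-contraction} the relation $\kT A$ is a symmetric contraction, so $\no{\kT A}\leq1$, and it only remains to produce elements of $\kT A$ whose quotients $\no k/\no h$ equal $1$ or tend to $1$, forcing $\no{\kT A}\geq1$. I would first record that since $A$ is positive one has $\ker(A+I)=\{0\}$ (if $\vE f{-f}\in A$ then $-\no f^2=\ip f{-f}\geq0$ gives $f=0$), so by \eqref{eq:Kt-components} $\mul\kT A=\{0\}$ and $\kT A$ is a bounded operator with $\no{\kT A}=\sup\{\no k/\no h:\vE hk\in\kT A,\ h\neq0\}$. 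Every such pair is $h=f+g$, $k=f-g$ with $\vE fg\in A$, and the identities $\no{f\pm g}^2=\no f^2+\no g^2\pm2\ip fg$ from the proof of Theorem~\ref{th:positive-Kt-contraction} are the computational core of the estimate.

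Then I would split according to the two hypotheses. If $A$ has a nontrivial kernel, pick $f\neq0$ with $\vE f0\in A$; the associated element of $\kT A$ is $\vE ff$, for which $\no k=\no h=\no f\neq0$, so the ratio $1$ is attained and the norm is exactly $1$. If instead $A$ is unbounded, I would, from the negation of boundedness, extract a sequence $\vE{f_n}{g_n}\in A$ with $\no{g_n}/\no{f_n}\to\infty$. Terms with $f_n=0$ (a nontrivial multivalued part, itself a source of unboundedness) give $\vE{g_n}{-g_n}\in\kT A$ with ratio exactly $1$; for $f_n\neq0$, normalizing $\no{f_n}=1$ and writing $t_n\colonequals\no{g_n}\to\infty$, positivity together with Cauchy--Schwarz yields $0\leq\ip{f_n}{g_n}\leq t_n$, so that
\begin{gather*}
\frac{\no{f_n-g_n}^2}{\no{f_n+g_n}^2}=\frac{1+t_n^2-2\ip{f_n}{g_n}}{1+t_n^2+2\ip{f_n}{g_n}}\geq\frac{(t_n-1)^2}{(t_n+1)^2}\xrightarrow[n\to\infty]{}1\,.
\end{gather*}
In either case $\no{\kT A}\geq1$, which combined with the contraction bound gives $\no{\kT A}=1$.

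The argument is elementary, so I do not expect a real obstacle; the only point requiring care is the unbounded case, where unboundedness must be turned into an explicit sequence realizing $\no{g_n}/\no{f_n}\to\infty$, and the possibility of a nontrivial multivalued part ($f_n=0$) must be isolated as a (simpler) sub-case. The passage from the exact quotient to the lower bound $(t_n-1)^2/(t_n+1)^2$ is just the monotonicity of $x\mapsto(a-2x)/(a+2x)$ on $[0,t_n]$ with $a=1+t_n^2$, so replacing $\ip{f_n}{g_n}$ by its Cauchy--Schwarz maximum $t_n$ can only decrease the quotient.
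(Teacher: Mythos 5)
Your proof is correct and takes essentially the same approach as the paper's: the upper bound from Theorem~\ref{th:positive-Kt-contraction}, the element $\vE ff\in\kT A$ when the kernel is nontrivial, and a sequence $\vE{f_n+g_n}{f_n-g_n}\in\kT A$ with norm-ratios tending to $1$ when $A$ is unbounded. The only cosmetic difference is in the last estimate: the paper gets $\frac{\no{f_n-g_n}}{\no{f_n+g_n}}\geq\frac{\no{g_n}-\no{f_n}}{\no{g_n}+\no{f_n}}$ directly from the triangle and reverse triangle inequalities (so positivity is not even needed at that step), whereas you expand the norms and use Cauchy--Schwarz with monotonicity; both give the same limit.
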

\begin{proof}
By virtue of Theorem~\ref{th:positive-Kt-contraction}, $\kT A$ is symmetric with $\no{\kT A}\leq 1$. If $A$ is unbounded, then there exists $\lrb{\vE {f_n}{g_n}}_{n\in\N} \subset A$ such that $\no {f_n}\in\{0,1\}$ and $\no {g_n}\to\infty$. In this fashion,  $\vE {f_n+g_n}{f_n-g_n} \in\kT A$, for all $n\in\N$ and 
\begin{gather*}
 1 \geq\no {\kT A}\geq \frac{\no{f_n-g_n}}{\no{f_n+g_n}}\geq\frac{\no{g_n}-\no{f_n}}{\no{g_n}+\no{f_n}}\to1\,.
\end{gather*}
Therefore,  $\no {\kT A}=1$. For the case that there is a unit element $f\in\ker A$, one has that $\vE ff\in\kT A$, which readily implies  $\no {\kT A}=1$.
\end{proof}
A positive relation with a nontrivial multivalued part is unbounded and, hence, satisfies conditions of Corollary~\ref{coro:norm1-inposi}.  In this fashion, if a symmetric contraction $A$ satisfies $\no A<1$, then $\kT A$ is positive and bounded, with $\ker \kT A =\{0\}$.

The following belong to the class of positive relations.

\begin{definition}\label{def:quasi-null}
We call a relation $A$ \emph{quasi-null} if 
\begin{gather*}
\ip fg=0\,,\quad\mbox{for all\,} \vE fg\in A\,.
\end{gather*}
\end{definition}

A quasi-null relation $A$ is l.s.b. as well as u.s.b., with greatest lower and smallest upper bounds $m_A=M_A=0$ (see Definition~\ref{def:lsb-usb} and \eqref{eq:GS-bound}).

\begin{remark}\label{rm:spectral-core-qn}
It is a simple matter to verify from Theorem~\ref{th:charact-lsb} and Corollary~\ref{cor:charact-usb} that a relation $A$ is quasi-null if and only if it is symmetric, $\hat\sigma(A)\subset\{0\}$ and for all $\alpha\in\R\backslash\{0\}$, 
\begin{gather*}
\no{(A-\alpha I)^{-1}}\leq \abs\alpha^{-1}\,.
\end{gather*}
\end{remark}

\begin{corollary}\label{cor:qn-Kt-isometric}
A relation is quasi-null if and only if its Krein transform is both symmetric and isometric.
\end{corollary}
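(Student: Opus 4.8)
The plan is to treat this as the equality case of Theorem~\ref{th:positive-Kt-contraction}, refining the contraction estimate \eqref{eq:positive-Kcontraction} into an isometry. Since a quasi-null relation is in particular positive, the correspondence between positive relations and symmetric contractions is already available, and the only remaining issue is to pin down exactly when the resulting contraction is isometric. Thus I expect both implications to reduce to reading off when the inequalities in the proof of Theorem~\ref{th:positive-Kt-contraction} become equalities.

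For the forward implication, I would assume $A$ is quasi-null, so in particular $A\geq0$, and invoke Theorem~\ref{th:positive-Kt-contraction} to get that $\kT A$ is a symmetric contraction. To upgrade this to isometry I would revisit \eqref{eq:positive-Kcontraction}: for $\vE{f+g}{f-g}\in\kT A$ with $\vE fg\in A$, the hypothesis $\ip fg=0$ forces $\no{f-g}^2=\no f^2+\no g^2=\no{f+g}^2$, so the two sides of \eqref{eq:positive-Kcontraction} coincide and $\kT A$ is isometric.

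For the converse, I would assume $\kT A$ is symmetric and isometric. An isometric relation is in particular a contraction, so Theorem~\ref{th:positive-Kt-contraction} already returns $A\geq0$. To promote positivity to quasi-nullity I would use the involution $\kT{\kT A}=A$ and write a generic element of $A$ as $\vE{h+k}{h-k}$ with $\vE hk\in\kT A$; symmetry of $\kT A$ makes $\ip hk$ real, so the conjugate cross terms cancel and \eqref{eq:contraction-Kpositive} gives $\ip{h+k}{h-k}=\no h^2-\no k^2$. Isometry means $\no h=\no k$, whence $\ip{h+k}{h-k}=0$ for every such element, i.e. $A$ is quasi-null.

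There is essentially no obstacle here beyond bookkeeping, since the corollary is a sharpening of an already-proved equivalence. The one point requiring care is the role of symmetry: it is precisely the reality of $\ip hk$ (equivalently, that $\kT A$ is symmetric) that annihilates the imaginary cross terms in $\ip{h+k}{h-k}$, so that this quantity collapses to $\no h^2-\no k^2$ and the isometry condition $\no h=\no k$ can be read off as the quasi-null condition directly.
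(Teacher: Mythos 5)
Your proposal is correct and follows essentially the same route as the paper, whose proof is literally ``following the same lines as the proof of Theorem~\ref{th:positive-Kt-contraction} and changing the inequalities~\eqref{eq:positive-Kcontraction} and \eqref{eq:contraction-Kpositive} by equalities.'' You merely spell out those equality cases explicitly (including the correct observation that symmetry of $\kT A$ makes $\ip hk$ real so the cross terms in $\ip{h+k}{h-k}$ cancel), which is exactly what the paper intends.
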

\begin{proof}
Following the same lines as the proof of Theorem~\ref{th:positive-Kt-contraction} and changing the inequalities~\eqref{eq:positive-Kcontraction} and \eqref{eq:contraction-Kpositive} by equalities, the assertion follows.
\end{proof}

Let us now present another characterization of quasi-null relations.
\begin{corollary}\label{cor:qn-equivalences} For a relation $A$, the following are equivalent:
\begin{enumerate}[(i)]
\item\label{it1-qrc} $A$ is quasi-null.
\item\label{it2-qrc} $A,-A\subset A^*$.
\item\label{it3-qrc} $\dom A\perp \ran A$. 
\end{enumerate}
\end{corollary}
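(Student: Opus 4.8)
The plan is to prove the three conditions equivalent by establishing the cycle (iii)$\Rightarrow$(i)$\Rightarrow$(ii)$\Rightarrow$(iii), using the Krein transform to handle the implication (i)$\Rightarrow$(ii) (in fact its converse comes along for free, so that (i)$\Leftrightarrow$(ii) is obtained at once) and reserving only elementary inner-product manipulations for the other two links.

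The heart of the argument is the equivalence (i)$\Leftrightarrow$(ii), which I would route entirely through $V\colonequals\kT A$. By Corollary~\ref{cor:qn-Kt-isometric}, $A$ is quasi-null precisely when $V$ is simultaneously symmetric and isometric, that is, when $V\subset V^*$ and $V^{-1}\subset V^*$. I then translate each inclusion back to $A$: Proposition~\ref{prop:kt-adjoint}(ii) gives $V^*=\kT{A^*}$, while the identity $\kT{A}^{-1}=\kT{-A}$ from \eqref{eq:Kt-properties} gives $V^{-1}=\kT{-A}$. Substituting these and invoking the monotonicity property $T\subset S\Leftrightarrow\kT T\subset\kT S$ from \eqref{eq:Kt-properties}, the pair $V\subset V^*$ and $V^{-1}\subset V^*$ becomes exactly $A\subset A^*$ and $-A\subset A^*$, which is condition (ii). Thus this equivalence is pure bookkeeping with the transform's listed properties, with no direct computation required.

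For the remaining links I would argue by hand. To prove (ii)$\Rightarrow$(iii), fix $\vE{f_0}{g_0},\vE fg\in A$: the inclusion $A\subset A^*$ applied to the first pair yields $\ip{g_0}f=\ip{f_0}g$, while $-A\subset A^*$ applied to $\vE{f_0}{-g_0}\in -A$ yields $-\ip{g_0}f=\ip{f_0}g$; adding the two forces $\ip{f_0}g=0$, and since $f_0$ ranges over $\dom A$ and $g$ over $\ran A$ this is precisely $\dom A\perp\ran A$. The implication (iii)$\Rightarrow$(i) is then immediate, for if $\dom A\perp\ran A$ then every $\vE fg\in A$ satisfies $\ip fg=0$, which is the definition of quasi-null.

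The only genuinely delicate point is conceptual rather than computational: conditions (ii) and (iii) are ``off-diagonal'', relating two distinct elements of $A$, whereas quasi-nullity is a ``diagonal'' condition imposed on each element separately, so passing from (i) to the others needs a polarization over $\C$. In the route above this is never performed explicitly—it is absorbed into Corollary~\ref{cor:qn-Kt-isometric} and into the standard equivalence between $\ip fg\in\R$ and symmetry recorded in Section~\ref{s3}. If one instead wanted a self-contained verification of (i)$\Rightarrow$(iii), the essential step would be to evaluate quasi-nullity on both $\vE{f_0+f}{g_0+g}$ and $\vE{f_0+if}{g_0+ig}$ in $A$ and compare the resulting identities, extracting $\ip{f_0}g=0$ and $\ip{g_0}f=0$; that single polarization, handled with the antilinear-first-entry convention, is the one place that rewards a little care.
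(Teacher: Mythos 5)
Your proof is correct and takes essentially the same route as the paper: the same Krein-transform argument for (i)$\Rightarrow$(ii) (via Corollary~\ref{cor:qn-Kt-isometric}, $\kT{A}^{-1}=\kT{-A}$, and $\kT{A}^{*}=\kT{A^*}$), the same two-identity cancellation for (ii)$\Rightarrow$(iii), and the same immediate (iii)$\Rightarrow$(i). The only cosmetic difference is that you package the transform step as a full equivalence (i)$\Leftrightarrow$(ii), which the paper does not need since it closes the cycle through (iii).
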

\begin{proof}
\eqref{it1-qrc}$\Rightarrow$\eqref{it2-qrc} Since $A$ is quasi-null, then it is symmetric, i.e., $A\subset A^*$, and by Corollary~\ref{cor:qn-Kt-isometric}, $\kT A$ is isometric. Thus, by \eqref{eq:Kt-properties} and Proposition~\ref{prop:kt-adjoint}, 
\begin{gather*}
\kT{-A}=\kT A^{-1}\subset\kT A^*=\kT{A^*}\,,
\end{gather*}
whence $-A\subset A^*$. \eqref{it2-qrc}$\Rightarrow$\eqref{it3-qrc} For $\vE fg,\vE hk \in A$, it follows that  $\vE hk,\vE h{-k}\in A^*$. Thus,
$\ip fk=\ip gh$ and $-\ip fk=\ip gh$, which implies $\ip fk=0$. \eqref{it3-qrc}$\Rightarrow$\eqref{it1-qrc} It is straightforward.
\end{proof}

The quasi-null relations will be practical in Section~\ref{s5} to find semi-bounded extensions of linear relations.

\section{Semi-bounded extensions of semi-bounded linear relations}\label{s5}
We describe in this section the semi-bounded extensions of a semi-bounded relation $A$. In some cases, we restrict our discussion to symmetric extensions of $A$ since we are focused on characterizing only those that are semi-bounded (see \cite[Sect.\,4]{MR4082306} for symmetric extensions of linear relations).

If $A$ is a closed symmetric relation for which $\hat\rho(A)\cap\R\neq\emptyset$. In particular, if $A$ is a closed semi-bounded relation (see Theorem~\ref{th:charact-lsb} and Corollary~\ref{cor:charact-usb}), then $\hat\rho(A)$ consists of one component only, and its indices \eqref{eq:deficiency-index} are equal. In such a case, we denote the index of $A$ by 
\begin{gather}\label{eq:semi-bounded-index}
\eta_A\colonequals \dim \dS_{\zeta}(A^*)\,,\quad \zeta\in\hat\rho(A)
\end{gather}

\begin{remark}\label{rm:sb-deficiency-index}
For a  closed semi-bounded relation $A$, the following hold:
\begin{enumerate}[(i)]
\item\label{it0:sb-di} Every closed symmetric extension $S$ of $A$ satisfies (cf. \cite[Cor.\,4.8]{MR4082306})
\begin{gather}\label{eq:indices-symmetric-extensions}
\eta_A=\dim \dS_{\pm i}(S^*)+\dim[S/ A]\,.
\end{gather}
\item\label{it1:sb-di} Proposition~\ref{prop:characterization-sa} and \eqref{eq:indices-symmetric-extensions} imply that $A$ is either selfadjoint or has closed symmetric extensions with selfadjoint extensions. 
\item\label{it2:sb-di} For $\eta_A<\infty$, all the selfadjoint extensions $S$ of $A$ have the same essential spectrum (cf. \cite[Sect.\,4]{MR4091412})
\begin{gather*}
\sigma_e(S)\colonequals\sigma_p^\infty(S)\cup\sigma_c(S)\,. 
\end{gather*}
\end{enumerate}
\end{remark}

It readily follows that the bound \eqref{eq:GS-bound} of any l.s.b. extension $S$ of a l.s.b. relation $A$ satisfies $m_A\geq m_S$. Similarly, $M_A\leq M_S$ holds for any u.s.b. extension $S$ of an u.s.b. relation $A$.

\begin{proposition}\label{prop:existence-sbe}
A l.s.b. (resp. u.s.b) relation has l.s.b. (resp. u.s.b) selfadjoint extensions.
\end{proposition}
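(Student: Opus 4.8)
The plan is to reduce to the positive case and then transport everything through the Krein transform, using the correspondence between positive relations and symmetric contractions from Theorem~\ref{th:positive-Kt-contraction}. Since the closure of a l.s.b.\ relation is again l.s.b.\ with the same bound, and a selfadjoint extension of $\cc A$ is also one of $A$, I may assume $A$ is closed and l.s.b.\ with greatest lower bound $m_A$, so that $A-m_AI\geq0$. Then $A-m_AI$ is closed (a closed relation plus a bounded operator), hence so is $V\colonequals\kT{A-m_AI}$ (the Krein transform of a closed relation is closed, cf.\ Proposition~\ref{prop:kt-adjoint}), and by Theorem~\ref{th:positive-Kt-contraction} it is a symmetric contraction. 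Being a contraction, $V$ has trivial multivalued part and, being closed and bounded, is a bounded symmetric operator on the closed subspace $\mathcal L\colonequals\dom V$ of $\mathcal H$.

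The crux is to produce a selfadjoint contractive extension $\tilde V\supseteq V$ with $\dom\tilde V=\mathcal H$. Decomposing $\mathcal H=\mathcal L\oplus\mathcal L^\perp$, write $Vf=A_0f+Bf$ for $f\in\mathcal L$, where $A_0\colonequals P_{\mathcal L}V$ is selfadjoint on $\mathcal L$ (this is exactly symmetry of $V$) and $B\colonequals P_{\mathcal L^\perp}V\colon\mathcal L\to\mathcal L^\perp$. Orthogonality gives $\no{Vf}^2=\no{A_0f}^2+\no{Bf}^2\leq\no f^2$, i.e.\ the operator inequality $A_0^2+B^*B\leq I$ on $\mathcal L$. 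This is precisely the condition under which the selfadjoint block completion $\tilde V=\begin{pmatrix}A_0 & B^*\\ B & C\end{pmatrix}$ admits a selfadjoint $C=C^*$ on $\mathcal L^\perp$ with $\no{\tilde V}\leq1$ (the classical selfadjoint contractive-completion / shorted-operator theorem; the admissible $C$ fill a nonempty operator interval). Thus $\tilde V$ is a selfadjoint contraction extending $V$.

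Transforming back, set $S\colonequals\kT{\tilde V}+m_AI$. Because the Krein transform is an involution preserving inclusions and selfadjointness (see \eqref{eq:Kt-properties} and the remark following Proposition~\ref{prop:kt-adjoint}), the relation $\kT{\tilde V}$ is positive and selfadjoint by Theorem~\ref{th:positive-Kt-contraction}, and $A-m_AI=\kT V\subseteq\kT{\tilde V}$. Adding the bounded, everywhere-defined selfadjoint operator $m_AI$ keeps $S$ selfadjoint (Remark~\ref{rm:dist-adjoint}) and yields $A=(A-m_AI)+m_AI\subseteq S$. Finally, every $\vE fg\in S$ has the form $g=g_0+m_Af$ with $\ip f{g_0}\geq0$, so $\ip fg\geq m_A\no f^2$ and $S$ is l.s.b.; hence $S$ is a l.s.b.\ selfadjoint extension of $A$. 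The u.s.b.\ case follows by applying this to $-A$: if $S$ is a l.s.b.\ selfadjoint extension of $-A$, then $-S$ is a u.s.b.\ selfadjoint extension of $A$, since $(-S)^*=-S^*=-S$.

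The main obstacle is the middle step, namely the existence of a selfadjoint contractive extension of a symmetric contraction. The inequality $A_0^2+B^*B\leq I$ is exactly the column- (and, by selfadjointness of $A_0$, row-) contractivity needed, so this is where one must invoke, or reprove, the selfadjoint contractive-completion theorem; the remaining steps are only the bookkeeping of the Krein transform and an elementary shift by $m_AI$.
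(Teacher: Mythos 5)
Your proof is correct, and it takes a genuinely different route from the paper's. The paper's argument is a two-line reduction: shift to $A-m_AI\geq 0$, cite Coddington--de Snoo \cite{MR500265} for the existence of positive selfadjoint extensions of a positive linear relation (the Friedrichs and Krein--von Neumann extensions), and shift back with Remark~\ref{rm:dist-adjoint} --- the same shift bookkeeping you perform at the end. You instead reprove the existence of positive selfadjoint extensions using the paper's own Section~\ref{s4} machinery: Theorem~\ref{th:positive-Kt-contraction} turns $A-m_AI$ into a symmetric contraction $V$, which is automatically single-valued, and closedness plus boundedness make $\dom V$ a closed subspace, so all multivaluedness and nondensity issues are absorbed by the Krein transform; the inequality $A_0^2+B^*B\leq I$ you derive is exactly the hypothesis of the classical theorem that a hermitian contraction on a subspace admits selfadjoint contractive extensions to the whole space (Krein's theorem, in block form the selfadjoint Davis--Kahan--Weinberger/Parrott completion), and transforming back via the involution and inclusion-preserving properties \eqref{eq:Kt-properties} together with Proposition~\ref{prop:kt-adjoint} yields a positive selfadjoint extension of $A-m_AI$. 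Each proof therefore rests on one external classical ingredient: the paper on the relation-theoretic result of \cite{MR500265}, yours on an operator-theoretic completion theorem --- which is in fact the core of Krein's original paper \cite{MR0024574} and is what underlies \cite{MR500265} in the first place. The paper's citation buys brevity and the two distinguished extremal extensions; your argument buys self-containedness on the relation side, showing that Theorem~\ref{th:positive-Kt-contraction} plus classical bounded-operator theory already suffices, very much in the spirit of Section~\ref{s4}, and it implicitly produces the whole operator interval of extensions rather than a single one. Your supporting steps (closedness of $\dom V$, selfadjointness of $P_{\mathcal L}V$ on $\dom V$, the selfadjointness of the shift via Remark~\ref{rm:dist-adjoint}, and the passage to $-A$ for the u.s.b.\ case) are all sound.
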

\begin{proof}
If $A$ is a l.s.b. relation with greatest lower bound $m_A$, then (cf. \cite{MR500265})  there exist two positive selfadjoint extensions $S_N,S_F$ of $A-m_AI\geq0$. Hence, one concludes from Remark~\ref{rm:dist-adjoint} that $S_N+m_AI$ and $S_F+m_AI$ are l.s.b. selfadjoint extensions of $A$. The proof when $A$ is u.s.b. follows the same lines as above.
\end{proof}

The following assertion will be practical in the sequel and, for the reader's convenience, is adapted from \cite[Props.\, 4.10 and 4.11]{MR4082306}.
\begin{lemma}\label{lem:sa-extension-lacunas}
Let $A$ be a closed symmetric relation such that $\hat\rho(A)\cap\R\neq\emptyset$. If $\eta_A<\infty$ then for every $\alpha\in\hat\rho(A)\cap\R$, the relation
\begin{gather*}
S_\alpha=A\dotplus\dS_\alpha(A^*)
\end{gather*}
is the unique selfadjoint extension of $A$ such that $\dim\dS_\alpha(S_\alpha)=\eta_A$, viz. $\alpha$ is an eigenvalue of multiplicity $\eta_A$. Besides, $\sigma(S_\alpha)\cap\hat\rho(A)\subset\sigma_d(S_\alpha)$, where the eigenvalues are of multiplicity at most $\eta_A$.
\end{lemma}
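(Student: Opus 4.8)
The plan is to establish, in order, that $S_\alpha$ is symmetric, closed, and selfadjoint, then to identify its $\alpha$-eigenspace, and finally to settle uniqueness together with the spectral assertion. I would first check symmetry directly. A generic element of $S_\alpha$ has the form $\vE{f+u}{g+\alpha u}$ with $\vE fg\in A$ and $u\in\ker(A^*-\alpha I)$, so that $\vE u{\alpha u}\in\dS_\alpha(A^*)\subset A^*$. Expanding $\ip{f+u}{g+\alpha u}$ produces $\ip fg$, the real scalar $\alpha\no u^2$, and the cross term $\alpha\ip fu+\ip ug$. Since $\vE u{\alpha u}\in A^*$ and $\vE fg\in A$ give, by the definition of the adjoint, $\ip{\alpha u}{f}=\ip ug$, and since $\alpha\in\R$ makes the inner product antilinear in its first slot contribute $\ip ug=\alpha\ip uf$, the cross term equals $\alpha(\ip fu+\ip uf)=2\alpha\,{\rm Re\,}\ip fu\in\R$. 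Hence every such inner product is real and $S_\alpha$ is symmetric.

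Next, because $\alpha\in\hat\rho(A)$ forces $\ker(A-\alpha I)=\{0\}$, the sum $A\dotplus\dS_\alpha(A^*)$ is genuinely direct: an element of the intersection would be of the form $\vE f{\alpha f}\in A$, giving $f\in\ker(A-\alpha I)=\{0\}$. Thus $\dim[S_\alpha/A]=\dim\dS_\alpha(A^*)=\eta_A$, and $S_\alpha$ is closed as the sum of the closed relation $A$ and the finite-dimensional space $\dS_\alpha(A^*)$. Feeding this into the index formula \eqref{eq:indices-symmetric-extensions} yields $\dim\dS_{\pm i}(S_\alpha^*)=\eta_A-\eta_A=0$, whence $S_\alpha$ is selfadjoint by Proposition~\ref{prop:characterization-sa}. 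For the multiplicity, I would decompose an arbitrary $\vE f{\alpha f}\in S_\alpha$ along $A\dotplus\dS_\alpha(A^*)$; matching second components shows the $A$-component lies in $\ker(A-\alpha I)=\{0\}$, so $\dS_\alpha(S_\alpha)=\dS_\alpha(A^*)$ and $\dim\dS_\alpha(S_\alpha)=\eta_A$.

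For uniqueness, let $S$ be a selfadjoint extension of $A$ with $\dim\dS_\alpha(S)=\eta_A$. Then $S=S^*\subset A^*$ gives $\dS_\alpha(S)\subset\dS_\alpha(A^*)$, and equality of these finite, equal dimensions yields $\dS_\alpha(S)=\dS_\alpha(A^*)$, so $S\supset A\dotplus\dS_\alpha(A^*)=S_\alpha$; maximality of the selfadjoint (hence maximal symmetric) relation $S_\alpha$ forces $S=S_\alpha$. For the spectral claim I would fix $\beta\in\sigma(S_\alpha)\cap\hat\rho(A)$, necessarily real, and compute at the level of pairs that $\ran(S_\alpha-\beta I)=\ran(A-\beta I)+\dS_\alpha(A^*)$. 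Since $\beta\in\hat\rho(A)$ makes $\ran(A-\beta I)$ closed (a closed relation bounded below has closed range) and $\dS_\alpha(A^*)$ is finite-dimensional, this sum is closed, so $\beta\notin\sigma_c(S_\alpha)$; as $\sigma(S_\alpha)=\hat\sigma(S_\alpha)=\sigma_p(S_\alpha)\cup\sigma_c(S_\alpha)$, $\beta$ is an eigenvalue. The decomposition argument used for $\alpha$ shows the $\beta$-eigenspace injects into $\dS_\alpha(A^*)$ via $\vE f{\beta f}\mapsto$ (its $\dS_\alpha(A^*)$-component), so its multiplicity is at most $\eta_A$ and $\beta\in\sigma_d(S_\alpha)$.

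The main obstacle I anticipate is the bookkeeping intrinsic to linear relations rather than operators: keeping multivalued parts straight in the direct-sum decompositions and carrying out the adjoint identity $\ip ug=\alpha\ip uf$ and the range identity $\ran(S_\alpha-\beta I)=\ran(A-\beta I)+\dS_\alpha(A^*)$ consistently at the level of pairs $\vE\cdot\cdot$. Once those computations are in place, the structural steps (invoking the index formula, Proposition~\ref{prop:characterization-sa}, and maximality) are routine.
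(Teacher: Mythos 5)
Your proof is correct, but there is nothing in the paper to compare it against: the text states that this lemma ``is adapted from [Props.\,4.10 and 4.11]'' of \cite{MR4082306} and gives no proof, so your argument is a self-contained reconstruction rather than a parallel of the paper's own reasoning. All the essential steps check out: symmetry of $S_\alpha$ via the identity $\ip{u}{g}=\alpha\ip{u}{f}$; directness of the sum from $\ker(A-\alpha I)=\{0\}$, which indeed follows from $\alpha\in\hat\rho(A)$; closedness of $A\dotplus\dS_\alpha(A^*)$ as a closed subspace plus a finite-dimensional one; selfadjointness from the index formula \eqref{eq:indices-symmetric-extensions} together with Proposition~\ref{prop:characterization-sa}; the identification $\dS_\alpha(S_\alpha)=\dS_\alpha(A^*)$; uniqueness via $S=S^*\subset A^*$, equality of finite dimensions, and maximal symmetry of selfadjoint relations; and the lacuna argument combining closedness of $\ran(A-\beta I)$ for $\beta\in\hat\rho(A)$ with $\sigma(S_\alpha)=\hat\sigma(S_\alpha)=\sigma_p(S_\alpha)\cup\sigma_c(S_\alpha)$ and the injection of $\ker(S_\alpha-\beta I)$ into $\dS_\alpha(A^*)$. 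Two points deserve tidying. First, \eqref{eq:indices-symmetric-extensions} is stated in Remark~\ref{rm:sb-deficiency-index} only for closed \emph{semi-bounded} $A$, while you invoke it for a closed symmetric $A$ with $\hat\rho(A)\cap\R\neq\emptyset$; this is legitimate because the cited source (Cor.\,4.8 of \cite{MR4082306}) holds for closed symmetric relations with equal indices, which is exactly the present setting, but you should appeal to that source rather than to the remark as stated. Second, your range identity $\ran(S_\alpha-\beta I)=\ran(A-\beta I)+\dS_\alpha(A^*)$ conflates pairs with their first components; the correct statement is
\begin{gather*}
\ran(S_\alpha-\beta I)=\ran(A-\beta I)+(\alpha-\beta)\ker(A^*-\alpha I)\,,
\end{gather*}
which reduces to $\ran(A-\alpha I)$ when $\beta=\alpha$ --- in either case a closed set plus a finite-dimensional space, so your conclusion that $\beta\notin\sigma_c(S_\alpha)$ stands.
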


Proposition~\ref{prop:existence-sbe} asserts the existence of semi-bounded selfadjoint extensions of a semi-bounded relation $A$. In what follows, we shall explicitly give semi-bounded extensions of $A$.

\begin{theorem}\label{th:sa-extension-lusb}
Let $A$ be a closed  l.s.b. (resp. u.s.b.) with index $\eta_A<\infty$. Then for $\alpha<m_A$ (resp. $\alpha>M_A$), the relation $S_\alpha=A\dotplus\dS_\alpha(A^*)$ satisfies the following:
\begin{enumerate}[(i)]
\item\label{it1-sa-lusb} $S_\alpha$ is the unique l.s.b. (resp. u.s.b.) selfadjoint extension of $A$, with greatest lower bound $m_{S_\alpha}=\alpha$ (resp. smallest upper bound $M_{S_\alpha}=\alpha$). 
\item\label{it2-sa-lusb} $\sigma(S_\alpha)\cap[\alpha, m_A)$ (resp. $\sigma(S_\alpha)\cap(M_A,\alpha]$) consist solely of isolated eigenvalues of multiplicity at most $\eta_A$.  In particular, $\alpha\in\sigma_p(S_\alpha)$ is of multiplicity $\eta_A$.
\item\label{it3-sa-lusb} $\sigma_e(S_{\alpha_1})=\sigma_e(S_{\alpha_2})$, for every $\alpha_1,\alpha_2<m_A$ (resp. $\alpha_1,\alpha_2>M_A$).
\end{enumerate}
\end{theorem}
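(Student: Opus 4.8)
The plan is to reduce everything to Lemma~\ref{lem:sa-extension-lacunas} together with one direct semi-boundedness computation, treating only the lower semi-bounded case; the upper case follows by applying the result to $-A$, using $m_{-A}=-M_A$. Throughout, fix $\alpha<m_A$ and observe that Theorem~\ref{th:charact-lsb} gives $\alpha\in(-\infty,m_A)\subset\hat\rho(A)\cap\R$, so Lemma~\ref{lem:sa-extension-lacunas} applies verbatim: $S_\alpha=A\dotplus\dS_\alpha(A^*)$ is a selfadjoint extension of $A$, $\alpha$ is an eigenvalue with $\dim\dS_\alpha(S_\alpha)=\eta_A$, $S_\alpha$ is the only selfadjoint extension with this property, and $\sigma(S_\alpha)\cap\hat\rho(A)\subset\sigma_d(S_\alpha)$ with multiplicities at most $\eta_A$. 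What remains for \eqref{it1-sa-lusb} is to verify that $S_\alpha$ is lower semi-bounded with greatest lower bound exactly $\alpha$, and to pin down uniqueness; \eqref{it2-sa-lusb} and \eqref{it3-sa-lusb} are then short consequences.

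First I would establish the bound by a direct computation on a general element of $S_\alpha$. Write $\vE fg=\vE{f_0}{g_0}+\vE{\varphi}{\alpha\varphi}$ with $\vE{f_0}{g_0}\in A$ and $\vE{\varphi}{\alpha\varphi}\in\dS_\alpha(A^*)$. Expanding and using that $\alpha$ is real,
\begin{gather*}
\ip fg=\ip{f_0}{g_0}+\alpha\ip{f_0}{\varphi}+\ip{\varphi}{g_0}+\alpha\no\varphi^2.
\end{gather*}
The crucial point is that, since $\vE\varphi{\alpha\varphi}\in A^*$ and $\vE{f_0}{g_0}\in A$, the definition of the adjoint yields $\ip{\varphi}{g_0}=\ip{\alpha\varphi}{f_0}=\alpha\ip{\varphi}{f_0}$, so the two cross terms combine into $2\alpha\,{\rm Re}\,\ip{f_0}{\varphi}$, which are precisely the mixed terms in $\alpha\no f^2=\alpha\no{f_0+\varphi}^2$. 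Subtracting, the deficiency contribution cancels completely and one is left with
\begin{gather*}
\ip fg-\alpha\no f^2=\ip{f_0}{g_0}-\alpha\no{f_0}^2\geq(m_A-\alpha)\no{f_0}^2\geq0,
\end{gather*}
the first inequality because $A$ is l.s.b.\ with bound $m_A$ and the second because $\alpha<m_A$. Hence $S_\alpha$ is l.s.b.\ with $m_{S_\alpha}\geq\alpha$. Conversely, a unit $\varphi\in\dS_\alpha(A^*)$ (which exists as $\eta_A\geq1$; if $\eta_A=0$ then $A=A^*$ and there is nothing to prove) gives $\vE{\varphi}{\alpha\varphi}\in S_\alpha$ with $\ip{\varphi}{\alpha\varphi}=\alpha$, so $m_{S_\alpha}\leq\alpha$, and therefore $m_{S_\alpha}=\alpha$. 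I regard this cancellation as the heart of the matter; the rest is bookkeeping.

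For the spectral assertions, part \eqref{it2-sa-lusb} follows by combining the bound with Lemma~\ref{lem:sa-extension-lacunas}. Indeed $m_{S_\alpha}=\alpha$ forces $\sigma(S_\alpha)\subset[\alpha,\infty)$, while $[\alpha,m_A)\subset\hat\rho(A)$ gives $\sigma(S_\alpha)\cap[\alpha,m_A)\subset\sigma_d(S_\alpha)$ with multiplicities at most $\eta_A$; these points are isolated because, by Remark~\ref{rm:sb-deficiency-index}\eqref{it2:sb-di}, $\sigma_e(S_\alpha)$ is disjoint from $\hat\rho(A)$, so a spectral point of finite multiplicity lying in the gap cannot be an accumulation point of the spectrum. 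That $\alpha$ itself is an eigenvalue of multiplicity exactly $\eta_A$ is the statement already recorded from the lemma. Part \eqref{it3-sa-lusb} is immediate from Remark~\ref{rm:sb-deficiency-index}\eqref{it2:sb-di}: since $S_{\alpha_1}$ and $S_{\alpha_2}$ are selfadjoint extensions of the same $A$ with $\eta_A<\infty$, they share the essential spectrum.

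The delicate point I expect to fight with is the word \emph{unique} in \eqref{it1-sa-lusb}. Lemma~\ref{lem:sa-extension-lacunas} already gives uniqueness in the form: $S_\alpha$ is the only selfadjoint extension for which $\alpha$ is an eigenvalue of multiplicity $\eta_A$. To upgrade this to uniqueness among l.s.b.\ selfadjoint extensions with greatest lower bound $\alpha$, I would show that any such $S$ realizes $\alpha$ as an eigenvalue of full multiplicity $\eta_A$: being l.s.b.\ with $m_S=\alpha<m_A$, its spectrum meets the gap $\hat\rho(A)$ at the bottom point $\alpha$, which — lying in $\hat\rho(A)$ and off $\sigma_e(S)$ — is a discrete eigenvalue, and one must then exclude multiplicity strictly below $\eta_A$. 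This exclusion is the crux of the uniqueness and the step I would scrutinize most, since an l.s.b.\ selfadjoint extension can a priori attain its lower bound with a bottom eigenspace of dimension smaller than $\eta_A$; pinning it down here should exploit the specific way $\dS_\alpha(A^*)$ enters $S_\alpha$, together with the finite-rank resolvent comparison on the gap afforded by $\eta_A<\infty$, so as to reduce the claim to the multiplicity characterization of Lemma~\ref{lem:sa-extension-lacunas}.
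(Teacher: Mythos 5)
Your core argument is the paper's own proof: the same decomposition $\vE{f_0+\varphi}{g_0+\alpha\varphi}$ of an element of $S_\alpha$, the same use of the adjoint identity $\ip{\varphi}{g_0}=\alpha\ip{\varphi}{f_0}$ to cancel the deficiency contribution against $\alpha\no{f_0+\varphi}^2$, leaving $\ip fg-\alpha\no f^2\geq(m_A-\alpha)\no{f_0}^2\geq0$; and items \eqref{it2-sa-lusb}, \eqref{it3-sa-lusb} are drawn, exactly as in the paper, from Lemma~\ref{lem:sa-extension-lacunas} and Remark~\ref{rm:sb-deficiency-index}. The only cosmetic difference is how you get $m_{S_\alpha}\leq\alpha$: you test a unit $\varphi$ with $\vE{\varphi}{\alpha\varphi}\in\dS_\alpha(A^*)$, while the paper argues that $m_{S_\alpha}>\alpha$ would put $\alpha$ in $\rho(S_\alpha)$, contradicting $\alpha\in\sigma_p(S_\alpha)$. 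Both are fine; your inequality $\geq$ is in fact the correct form of the paper's displayed strict inequality, which fails when $f_0=0$.

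Where you diverge is your last paragraph, and there the correct advice is: stop. The strengthened uniqueness you set out to prove --- that $S_\alpha$ is the only l.s.b.\ selfadjoint extension of $A$ whose greatest lower bound equals $\alpha$ --- is not what the paper establishes, and it is false whenever $\eta_A\geq2$. Take $A=A_1\oplus A_2$ with each $A_i$ closed, symmetric, l.s.b.\ with $m_{A_i}=m_A$ and index $1$, so that $\eta_A=2$. For $\alpha<\beta<m_A$ the relation
\begin{gather*}
S=\bigl(A_1\dotplus\dS_\alpha(A_1^*)\bigr)\oplus\bigl(A_2\dotplus\dS_\beta(A_2^*)\bigr)
\end{gather*}
is a selfadjoint extension of $A$, l.s.b.\ with greatest lower bound exactly $\alpha$, yet $\alpha$ is an eigenvalue of $S$ of multiplicity $1<\eta_A$, so $S\neq S_\alpha$. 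This realizes precisely the scenario you flagged (``a bottom eigenspace of dimension smaller than $\eta_A$''), so the exclusion step you were hunting for cannot exist. The word \emph{unique} in item \eqref{it1-sa-lusb} must be read the way the paper's proof (and its introduction) reads it: by Lemma~\ref{lem:sa-extension-lacunas}, $S_\alpha$ is the unique selfadjoint extension having $\alpha$ as an eigenvalue of multiplicity $\eta_A$, and this particular extension is moreover l.s.b.\ with $m_{S_\alpha}=\alpha$. Under that reading your proof is already complete. One small additional caveat: in item \eqref{it2-sa-lusb} you cite Remark~\ref{rm:sb-deficiency-index} for the claim $\sigma_e(S_\alpha)\cap\hat\rho(A)=\emptyset$; that remark only gives invariance of $\sigma_e$ among extensions, and the disjointness/isolation in the gap is what Lemma~\ref{lem:sa-extension-lacunas} (i.e.\ the discreteness statement imported from its source) actually supplies.
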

\begin{proof}
Since $A$ is closed and l.s.b.,  $\eta_A<\infty$ and $\alpha<m_A$, it follows from Theorem~\ref{th:charact-lsb} and Lemma~\ref{lem:sa-extension-lacunas} that $S_\alpha=A\dotplus\dS_\alpha(A^*)$ is the unique selfadjoint extension of $A$, such that $\alpha\in\sigma_p(S_\alpha)$ and it is of multiplicity $\eta_A$. Now, for every $\vE{f+h}{g+\alpha h}\in S_\alpha$, with $\vE fg\in A$ and $\vE h{\alpha h}\in A^*$, one has that $\ip fg\geq m_A\no f^2$ and $\ip hg=\alpha\ip hf$. Thus, 
\begin{align*}
\ip{f+h}{g+\alpha h}\geq m_A\no f^2+\alpha 2{\rm Re\,}\ip fh+\alpha\no h^2>\alpha\no{f+h}^2\,,
\end{align*}
whence it follows that $S_\alpha$ is l.s.b. and $m_{S_\alpha}\geq\alpha$, which are equals, otherwise $\alpha\in\rho(S_\alpha)$, a contradiction since $\alpha\in\sigma_p(S_\alpha)$. This proves item \eqref{it1-sa-lusb} and item \eqref{it2-sa-lusb} follows from Theorem~\ref{th:charact-lsb} and Lemma~\ref{lem:sa-extension-lacunas}, while item \eqref{it3-sa-lusb} follows from Remark~\ref{rm:sb-deficiency-index}-\eqref{it2:sb-di}. The case when $A$ is u.s.b is simple, since $-A$ is l.s.b. 
\end{proof}

We now turn our attention to positive and quasi-null relations.

\begin{corollary} If $A$ is a closed positive relation with index $\eta_A<\infty$. Then
\begin{gather*}
S_\alpha=A\dotplus\dS_\alpha(A^*)\,,\quad \alpha<0
\end{gather*}
is the unique l.s.b. selfadjoint extension of $A$, with greatest lower bound $m_{S_\alpha}=\alpha$. Besides, $\sigma(S_\alpha)\cap[\alpha, 0)$ consist solely of isolated eigenvalues of multiplicity at most $\eta_A$. In particular, $\alpha\in\sigma_p(S_\alpha)$ is of multiplicity $\eta_A$. Moreover, $\sigma_e(S_{\alpha_1})=\sigma_e(S_{\alpha_2})$, for every $\alpha_1,\alpha_2<0$.
\end{corollary}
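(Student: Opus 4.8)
The plan is to read the statement off directly from Theorem~\ref{th:sa-extension-lusb}, the only genuine work being the reduction of the positivity hypothesis to the lower semi-bounded hypothesis required there. First I would recall that a positive relation is in particular lower semi-bounded and that, by \eqref{eq:positive} together with the definition of the greatest lower bound in \eqref{eq:GS-bound}, one has $m_A\geq 0$. Consequently, for any $\alpha<0$ the chain $\alpha<0\leq m_A$ gives $\alpha<m_A$, so that all the hypotheses of Theorem~\ref{th:sa-extension-lusb} are satisfied: $A$ is closed, l.s.b., with $\eta_A<\infty$, and $\alpha$ lies in the admissible range $(-\infty,m_A)$.

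Applying that theorem then yields at once that $S_\alpha=A\dotplus\dS_\alpha(A^*)$ is the unique lower semi-bounded selfadjoint extension of $A$ with greatest lower bound $m_{S_\alpha}=\alpha$, that $\alpha\in\sigma_p(S_\alpha)$ is an eigenvalue of multiplicity $\eta_A$, and, by item~\eqref{it3-sa-lusb}, that $\sigma_e(S_{\alpha_1})=\sigma_e(S_{\alpha_2})$ for all $\alpha_1,\alpha_2<m_A$; in particular this equality persists on the smaller range $\alpha_1,\alpha_2<0$ claimed in the corollary, since $\{\alpha<0\}\subseteq\{\alpha<m_A\}$.

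It remains only to transfer the spectral statement from the interval $[\alpha,m_A)$ furnished by item~\eqref{it2-sa-lusb} of Theorem~\ref{th:sa-extension-lusb} to the interval $[\alpha,0)$ appearing here. Because $m_A\geq 0$, one has the inclusion $[\alpha,0)\subseteq[\alpha,m_A)$, so the assertion that $\sigma(S_\alpha)\cap[\alpha,m_A)$ consists solely of isolated eigenvalues of multiplicity at most $\eta_A$ restricts verbatim to $\sigma(S_\alpha)\cap[\alpha,0)$, with no further computation needed. There is essentially no obstacle to overcome; the only point requiring care is purely bookkeeping, namely to keep in mind that for a positive relation $m_A$ need not equal $0$, so that $[\alpha,0)$ is in general a proper subinterval of the range on which Theorem~\ref{th:sa-extension-lusb} already delivers the conclusion.
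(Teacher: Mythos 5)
Your proposal is correct and takes essentially the same route as the paper, whose entire proof reads ``Setting $m_A=0$ in Theorem~\ref{th:sa-extension-lusb}, one yields the assertion.'' If anything, your version is the more careful one: you rightly observe that positivity only gives $m_A\geq 0$ (not $m_A=0$), and you make the reduction rigorous via the inclusions $[\alpha,0)\subseteq[\alpha,m_A)$ and $\{\alpha<0\}\subseteq\{\alpha<m_A\}$, which is exactly the bookkeeping the paper's terse proof leaves implicit.
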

\begin{proof}
Setting $m_A=0$ in Theorem~\ref{th:sa-extension-lusb}, one yields the assertion.
\end{proof}

The following shows a similar result to the previous one for quasi-null relations, whose proof is straightforward from Theorem~\ref{th:sa-extension-lusb}.

\begin{corollary}\label{coro:sb-extensions-qn-fi}
If $A$ is a closed quasi-null relation, with index $\eta_A<\infty$. Then, every $\alpha\in\R\backslash\{0\}$ is an eigenvalue of multiplicity $\eta_A$, of one, and only one, selfadjoint extension of $A$, given by
\begin{gather}\label{eq:unique-SA-sb}
S_\alpha=A\dotplus\dS_\alpha(A^*)\,.
\end{gather}
Besides,  $\sigma(S_\alpha)\cap\R\backslash\{0\}$ consists of isolated eigenvalues of multiplicity at most $\eta_A$ and, therefore, $\sigma_e(S_\alpha)\subset \{0\}$. All the selfadjoint extensions \eqref{eq:unique-SA-sb} have the same essential spectrum, and
when $\alpha<0$, $S_\alpha$ is l.s.b., while for $\alpha>0$, $S_\alpha$ is u.s.b.
\end{corollary}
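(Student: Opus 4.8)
The plan is to derive everything from Theorem~\ref{th:sa-extension-lusb} together with the two basic facts, recorded just after Definition~\ref{def:quasi-null} and in Remark~\ref{rm:spectral-core-qn}, that a quasi-null relation $A$ is \emph{both} l.s.b. and u.s.b. with $m_A=M_A=0$ and satisfies $\hat\sigma(A)\subset\{0\}$. First I would fix $\alpha\in\R\backslash\{0\}$ and split into the cases $\alpha<0$ and $\alpha>0$. For $\alpha<0$ the inequality $\alpha<m_A=0$ places us exactly in the l.s.b. hypothesis of Theorem~\ref{th:sa-extension-lusb}, so $S_\alpha=A\dotplus\dS_\alpha(A^*)$ is l.s.b.\ with $m_{S_\alpha}=\alpha$ and has $\alpha$ as an eigenvalue of multiplicity $\eta_A$. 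Dually, for $\alpha>0$ the inequality $\alpha>M_A=0$ puts us in the u.s.b.\ branch, giving that $S_\alpha$ is u.s.b.\ with $M_{S_\alpha}=\alpha$ and again has $\alpha$ as an eigenvalue of multiplicity $\eta_A$. This already yields the semi-boundedness clause ($S_\alpha$ l.s.b.\ for $\alpha<0$, u.s.b.\ for $\alpha>0$).

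For the uniqueness across \emph{all} selfadjoint extensions, rather than merely within the semi-bounded ones, I would invoke Lemma~\ref{lem:sa-extension-lacunas} directly. Since $\hat\sigma(A)\subset\{0\}$ we have $\R\backslash\{0\}\subset\hat\rho(A)\cap\R$, so for every $\alpha\in\R\backslash\{0\}$ the lemma applies and guarantees that $S_\alpha=A\dotplus\dS_\alpha(A^*)$ is the one and only selfadjoint extension of $A$ for which $\alpha$ is an eigenvalue of multiplicity $\eta_A$. The same lemma gives $\sigma(S_\alpha)\cap\hat\rho(A)\subset\sigma_d(S_\alpha)$ with multiplicities bounded by $\eta_A$; intersecting with $\R\backslash\{0\}\subset\hat\rho(A)$ shows that $\sigma(S_\alpha)\cap(\R\backslash\{0\})$ consists solely of isolated eigenvalues of multiplicity at most $\eta_A$, which is the asserted spectral structure.

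From this I would extract $\sigma_e(S_\alpha)\subset\{0\}$. Recalling $\sigma_e(S_\alpha)=\sigma_p^\infty(S_\alpha)\cup\sigma_c(S_\alpha)$, the point spectrum of infinite multiplicity cannot meet $\R\backslash\{0\}$, since there every eigenvalue has multiplicity at most $\eta_A<\infty$; and the continuous spectrum cannot meet $\R\backslash\{0\}$ either, because an isolated eigenvalue of finite multiplicity has closed range and so lies outside $\sigma_c$. As $S_\alpha$ is selfadjoint, $\sigma(S_\alpha)=\hat\sigma(S_\alpha)\subset\R$ by Proposition~\ref{prop:characterization-sa}, whence $\sigma_e(S_\alpha)\subset\{0\}$. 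Finally, that all the $S_\alpha$ share the same essential spectrum is immediate from Remark~\ref{rm:sb-deficiency-index}-\eqref{it2:sb-di}, valid because $\eta_A<\infty$. The only point demanding care — rather than a genuine obstacle — is the bookkeeping of the two sign regimes, so that the l.s.b.\ conclusions of Theorem~\ref{th:sa-extension-lusb} are matched to $\alpha<0$ and the u.s.b.\ ones to $\alpha>0$; everything else is a direct citation of Lemma~\ref{lem:sa-extension-lacunas} and Theorem~\ref{th:sa-extension-lusb}.
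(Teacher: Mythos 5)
Your proof is correct and follows essentially the same route as the paper, which simply declares the corollary ``straightforward from Theorem~\ref{th:sa-extension-lusb}'': you apply that theorem in the two sign regimes using $m_A=M_A=0$, and supplement it with Lemma~\ref{lem:sa-extension-lacunas} and Remark~\ref{rm:sb-deficiency-index}-\eqref{it2:sb-di}, which are precisely the ingredients behind that theorem. Your explicit appeal to Lemma~\ref{lem:sa-extension-lacunas} to get uniqueness among \emph{all} selfadjoint extensions and the spectral structure on all of $\R\backslash\{0\}$ (not merely on $[\alpha,0)$ or $(0,\alpha]$) is exactly the right bookkeeping the paper leaves implicit.
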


\begin{remark}
By virtue of Remark~\ref{rm:sb-deficiency-index}-\eqref{it1:sb-di}, a closed positive relation $A$ has closed symmetric extensions, which can be determined by the so-called second von Neumann formula for linear relations. Namely, $S$ is a closed symmetric extension of $A$ if and only if 
 \begin{gather}\label{eq:von02d}
  S=A\oplus (V- I)D\,,
 \end{gather}
 where $D\subset \dS_i (A^*)$ is a closed bounded relation, and $ V:D\to \dS_{-i} (A^*)$ is a closed isometry in $(\HH)\oplus(\HH)$ \cite[Thm.\,4.7]{MR4082306}.
\end{remark}

Let us restrict our discussions to finding positive extensions of positive relations.

\begin{theorem}\label{th:positive-extensions}
Let $A$ be a closed positive relation and $S$ a closed symmetric extension of $A$. Then $S$ is positive if and only if there exists a unique closed positive relation $L\subset A^*$ such that 
\begin{gather}\label{eq:positive-extension}
S=A\oplus L\,,
\end{gather}
and, for every $\vE fg\in A$ and $\vE hk\in L$, 
\begin{gather}\label{eq:positive-extension-condition}
\ip fg+\ip hk\geq 2\abs{a}\,,\quad a\in\{{\rm Re\,}\ip fk,\,{\rm Im\,}\ip fk\}.
\end{gather}
When $A$ is quasi-null, the condition \eqref{eq:positive-extension-condition} turns into $\ip fk=\ip gh=0$.
\end{theorem}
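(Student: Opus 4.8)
The plan is to realize $L$ as the orthogonal complement of $A$ inside $S$ and then reduce the whole statement to a single inner-product identity. First I would set $L\colonequals S\ominus A$, the orthogonal complement of the closed subspace $A$ inside the closed subspace $S$ of $\HH$; this is automatically a closed relation and produces the orthogonal decomposition $S=A\oplus L$, with $L$ the unique relation orthogonal to $A$ having this property (any competitor must coincide with $S\ominus A$), which will give the uniqueness claim. The containment $L\subset A^*$ comes for free: from $A\subset S$ and the symmetry of $S$ one gets, via \eqref{eq:p-Adjoint}, the chain $S\subset S^*\subset A^*$, so $L\subset S\subset A^*$. The same chain $L\subset S\subset S^*\subset L^*$ shows $L$ is symmetric, hence $\ip hk\in\R$ for every $\vE hk\in L$, which is what makes $L$ eligible to be positive.

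Next I would record the key identity. Fix $\vE fg\in A$ and $\vE hk\in L$. Since $\vE hk\in L\subset A^*$, the defining relation of the adjoint gives $\ip hg=\ip kf=\cc{\ip fk}$. A generic element of $S=A\oplus L$ is $\vE{f+h}{g+k}$, and expanding the inner product yields
\[
\ip{f+h}{g+k}=\ip fg+\ip hk+\ip fk+\ip hg=\ip fg+\ip hk+2{\rm Re\,}\ip fk,
\]
in which all three summands are real. For the forward implication I would assume $S$ positive: choosing $\vE fg=\vE00$ gives $\ip hk\ge0$, so $L$ is positive; for the cross term I would replace $\vE hk$ by the unit multiples $\vE{\zeta h}{\zeta k}\in L$ with $\zeta\in\{1,-1,i,-i\}$, so that $\ip hk$ is unchanged while $\ip fk$ becomes $\zeta\ip fk$, and positivity of $S$ gives $\ip fg+\ip hk+2{\rm Re\,}(\zeta\ip fk)\ge0$; the four sign choices then deliver $\ip fg+\ip hk\ge 2\abs{{\rm Re\,}\ip fk}$ and $\ge 2\abs{{\rm Im\,}\ip fk}$, i.e. \eqref{eq:positive-extension-condition}. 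Conversely, if $L$ is positive and \eqref{eq:positive-extension-condition} holds, the identity together with the case $a={\rm Re\,}\ip fk$ gives $\ip{f+h}{g+k}\ge 2\abs{{\rm Re\,}\ip fk}+2{\rm Re\,}\ip fk\ge0$, so $S$ is positive.

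For the quasi-null reduction I would use $\ip fg=0$. Scaling $\vE fg\mapsto\vE{tf}{tg}\in A$ with $t>0$ in \eqref{eq:positive-extension-condition} turns the bound into $\ip hk\ge 2t\,\abs a$ with $a\in\{{\rm Re\,}\ip fk,{\rm Im\,}\ip fk\}$, forcing ${\rm Re\,}\ip fk={\rm Im\,}\ip fk=0$ upon letting $t\to\infty$; hence $\ip fk=0$ and $\ip gh=\cc{\ip hg}=\ip fk=0$. The converse is immediate, since $\ip fk=0$ collapses \eqref{eq:positive-extension-condition} to $\ip hk\ge0$, i.e. positivity of $L$.

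I expect the only genuine obstacle to be the bookkeeping of the decomposition: verifying cleanly that $L\colonequals S\ominus A$ lands in $A^*$ and is symmetric, so that the cross term $\ip hg$ may legitimately be replaced by $\cc{\ip fk}$. Once that identity is secured, both the general characterization and the quasi-null specialization follow from the elementary scaling and sign manipulations described above.
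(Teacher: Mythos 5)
Your proof is correct and takes essentially the same route as the paper: both define $L=S\ominus A$, expand the inner product of a generic element of $A\oplus L$ using $\ip hg=\ip kf$ from $L\subset A^*$, extract \eqref{eq:positive-extension-condition} by multiplying one component by the four unimodular scalars $\pm1,\pm i$ (the paper scales the $A$-component, you scale the $L$-component — an immaterial difference), and dispose of the quasi-null case by an unbounded scaling argument. The only cosmetic difference is that you spell out the converse and the uniqueness, which the paper leaves as "straightforward."
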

\begin{proof}
We first suppose that $S$ is a closed positive extension of $A$. Then, it is easy to verify that $L=S\ominus A\subset A^*$ and it is a closed positive relation, which satisfies \eqref{eq:positive-extension}. Besides, for $\vE fg\in A\subset S,\vE hk\in L\subset S$, it follows that $\vE{\alpha f+h}{\alpha g+k}\in S$, with $\alpha\in\C$. Since $S$ is positive, $\ip kf=\ip hg$ and 
\begin{gather}\label{eq:aux-eq-dS}
0\leq\ip{\alpha f+h}{\alpha g+k}=\abs{\alpha}^2\ip fg+\ip hk+2{\rm Re\,}(\alpha\ip kf)\,.
\end{gather}
Hence, taking $\alpha\in\lrb{\pm 1,\pm i}$, one arrives at \eqref{eq:positive-extension-condition}. If $A$ is quasi-null then $\ip fg=0$, and if $\ip fk\neq0$, there would exist $\beta>0$ such that $0>\ip hk-\beta\abs{\ip fk}$, a contradiction with \eqref{eq:aux-eq-dS}, for $\alpha=-\beta\abs{\ip fk}/\ip kf$. Hence, $\ip gh=\ip fk=0$. 

The uniqueness is simple since if $\hat L$ satisfies the conditions of $L$. Thus, one has at once that $\hat L=L$. The converse assertion is straightforward.
\end{proof}

\begin{remark}
The uniqueness of Theorem~\ref{th:positive-extensions} implies that the relation $L$ in \eqref{eq:positive-extension} coincides with the relation $(V-I)D$ in \eqref{eq:von02d}.
\end{remark}

The following gives a characterization of quasi-null relations through its positive extensions.
\begin{corollary}
Let $S$ be a closed positive extension of a closed positive relation $A$. Then, $A$ is quasi-null if and only if $-S\subset A^*$.
\end{corollary}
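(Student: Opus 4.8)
The plan is to use the structural decomposition of Theorem~\ref{th:positive-extensions} for the substantive implication, while the characterization in Corollary~\ref{cor:qn-equivalences} disposes of the other one almost for free.

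I would first treat the converse, $-S\subset A^*\Rightarrow A$ quasi-null, since it is immediate. Because $S$ extends $A$ we have $A\subset S$ and hence $-A\subset -S\subset A^*$; moreover $A$ is positive, so symmetric, giving $A\subset A^*$. Thus $A,-A\subset A^*$, and Corollary~\ref{cor:qn-equivalences}\eqref{it2-qrc} yields that $A$ is quasi-null. No computation is needed here.

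For the forward direction I would invoke Theorem~\ref{th:positive-extensions}: since $S$ is a closed positive extension of the closed positive relation $A$, it decomposes uniquely as $S=A\oplus L$ with $L\subset A^*$ a closed positive relation, and because $A$ is quasi-null the same theorem sharpens the compatibility condition to $\ip fk=\ip gh=0$ for all $\vE fg\in A$ and $\vE hk\in L$. The two ingredients I then need are $-A\subset A^*$, again from Corollary~\ref{cor:qn-equivalences}\eqref{it2-qrc}, and $-L\subset A^*$. For the latter, fix $\vE hk\in L$ and $\vE fg\in A$; the vanishing products $\ip gh=0$ and $\ip fk=0$, together with the antilinearity of the inner product in its first entry, give $\ip hg=0=\ip kf$, so that $\ip{-k}f=\ip hg$ and therefore $\vE h{-k}\in A^*$.

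Finally I would glue the two inclusions together. A generic element of $-S$ has the form $\vE{f+h}{-(g+k)}=\vE f{-g}+\vE h{-k}$ with $\vE fg\in A$ and $\vE hk\in L$; since $\vE f{-g}\in-A\subset A^*$ and $\vE h{-k}\in-L\subset A^*$ and $A^*$ is a linear subspace of $\HH$, the sum belongs to $A^*$. Hence $-S\subset A^*$, which closes the equivalence. The only delicate point is the forward direction: one must read off the correct pair of orthogonality relations that Theorem~\ref{th:positive-extensions} supplies in the quasi-null case and keep track of the conjugation coming from the antilinearity convention; once those are in hand the rest is routine linearity bookkeeping inside $A^*$.
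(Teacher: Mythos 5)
Your proof is correct and follows essentially the same route as the paper: the converse uses $A\subset S$ together with Corollary~\ref{cor:qn-equivalences}, and the forward direction uses the decomposition $S=A\oplus L$ from Theorem~\ref{th:positive-extensions} with the quasi-null orthogonality conditions to get $-A\subset A^*$ and $-L\subset A^*$, hence $-S\subset A^*$. The extra care you take with the conjugation in verifying $-L\subset A^*$ is a detail the paper leaves implicit, but the argument is the same.
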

\begin{proof}
If $A$ is quasi-null then by Corollary~\ref{cor:qn-equivalences} $-A\subset A^*$. Besides, the decomposition \eqref{eq:positive-extension} asserts that for every $\vE fg\in A$ and $\vE hk\in L$,  $\ip{f}{-k}=0=\ip{g}{h}$, which implies $-L\subset A^*$. Hence, $-S=(-A)\oplus(-L)\subset A^*$. Conversely, since $A$ is positive and $A\subset S$, then $A\subset A^*$ and  $-A\subset -S\subset A^*$.  Therefore, one has by Corollary~\ref{cor:qn-equivalences} that $A$ is quasi-null.
\end{proof}

We conclude this section by presenting the analogous formula of \eqref{eq:von02d} for positive extensions of quasi-null relations.
\begin{theorem}\label{th:positive-extensions-qn}
Let $A$ be a closed quasi-null relation and $S$ a closed symmetric extension of $A$. Then $S$ is positive (quasi-null) if and only if
\begin{gather}\label{eq:positive-extension-qn}
S=A\oplus(V-I)D\,,
\end{gather}
where $D\subset\dS_{1}(-A^*)$ is a closed bounded relation, and $V\colon D\to \dS_{-1}(-A^*)$ is a closed contraction (isometry) in $(\HH)\oplus(\HH)$. 
\end{theorem}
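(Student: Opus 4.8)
The plan is to transport the whole problem through the Krein transform, where positivity becomes contractivity and quasi-nullity becomes isometry. Set $U\colonequals\kT A$. Since $A$ is closed and quasi-null, Corollary~\ref{cor:qn-Kt-isometric} and Proposition~\ref{prop:kt-adjoint}\,\eqref{it3-kt} (which gives $\cc{\kT A}=\kT{\cc A}=\kT A$) show that $U$ is a \emph{closed symmetric isometry}; by \eqref{eq:Kt-components} its domain is $\dom U=\ran(A+I)$ and its range is $\ran U=\ran(A-I)$, both closed since $U$ is closed and bounded on its domain. Using $(A\pm I)^*=A^*\pm I$ (Remark~\ref{rm:dist-adjoint}) and $\ker T^*=(\ran T)^\perp$, one identifies $(\dom U)^\perp=\ker(A^*+I)$ and $(\ran U)^\perp=\ker(A^*-I)$, which are exactly the first-coordinate subspaces of $\dS_1(-A^*)$ and $\dS_{-1}(-A^*)$, because $(h,h)\in -A^*\Leftrightarrow(h,-h)\in A^*$ and $(k,-k)\in -A^*\Leftrightarrow(k,k)\in A^*$. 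By Theorem~\ref{th:positive-Kt-contraction} and Corollary~\ref{cor:qn-Kt-isometric}, together with the fact that $\kT$ is an orthogonality- and adjoint-preserving order isomorphism, a closed symmetric extension $S$ of $A$ is positive (quasi-null) if and only if $B\colonequals\kT S$ is a closed symmetric contraction (isometry) extending $U$.

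For the forward implication I would argue as follows. If $S$ is positive then $B=\kT S$ is a contraction, hence an operator, and it is isometric on $\dom U$ since it extends the isometry $U$ there. The key structural step is the elementary rigidity of contractions: if $B$ is a contraction that is isometric on a subspace $\mathcal M$, then $\ip{Bm}{Bn}=0$ whenever $m\in\mathcal M$ and $n\in\dom B$ with $n\perp\mathcal M$, obtained by expanding $\no{B(m+tn)}^2\le\no{m+tn}^2$ and letting the scalar $t\to0$ with the phase making $\bar t\ip{Bm}{Bn}$ positive. Applying this with $\mathcal M=\dom U$ and writing $\mathcal D\colonequals\dom B\cap(\dom U)^\perp$, one gets $\dom B=\dom U\oplus\mathcal D$ and $B\mathcal D\subset(\ran U)^\perp$, so that $C\colonequals B|_{\mathcal D}$ is a contraction from $\mathcal D\subset(\dom U)^\perp$ into $(\ran U)^\perp$ and $B=U\oplus\operatorname{graph}(C)$ is an orthogonal sum in $\HH$. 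Putting $W\colonequals-C$, the data $D=\{(h,h):h\in\mathcal D\}\subset\dS_1(-A^*)$ and $V\colon(h,h)\mapsto(Wh,-Wh)\in\dS_{-1}(-A^*)$ form a closed contraction with $\kT{(V-I)D}=\operatorname{graph}(C)$; applying the involution $\kT$ to $B=U\oplus\operatorname{graph}(C)$ and using $\kT{T\oplus R}=\kT T\oplus\kT R$ from \eqref{eq:Kt-properties} yields $S=A\oplus(V-I)D$. If moreover $S$ is quasi-null, then $B$ is isometric everywhere, so $C$ and hence $W$ are isometries.

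The converse is a direct computation. Given the stated data and $W\colonequals -C$ as above, one has $\kT S=U\oplus\operatorname{graph}(-W)$. For $w=u+n$ with $u\in\dom U$ and $n\in\mathcal D\subset(\dom U)^\perp$, the vectors $Uu\in\ran U$ and $Wn\in(\ran U)^\perp$ are orthogonal, whence $\no{(\kT S)w}^2=\no{Uu}^2+\no{Wn}^2=\no u^2+\no{Wn}^2\le\no u^2+\no n^2=\no w^2$, using that $W$ is a contraction and $u\perp n$. Thus $\kT S$ is a contraction; being symmetric because $S$ is, Theorem~\ref{th:positive-Kt-contraction} gives that $S$ is positive. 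In the isometric case $\no{Wn}=\no n$ forces equality throughout, so $\kT S$ is a symmetric isometry and $S$ is quasi-null by Corollary~\ref{cor:qn-Kt-isometric}.

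I expect the main obstacle to be the splitting step: proving that the \emph{new part} of a symmetric contractive extension is genuinely supported on $(\dom U)^\perp$ and maps into $(\ran U)^\perp$, i.e.\ that it does not interfere with the isometric core $U$. The contraction-rigidity lemma is precisely what makes this orthogonal decomposition available, and it is the only non-formal ingredient; once $B=U\oplus\operatorname{graph}(C)$ is in hand, everything else is bookkeeping through the involution $\kT$. Should one wish to record uniqueness of the data, it follows from the uniqueness of $L=S\ominus A$ in Theorem~\ref{th:positive-extensions}, since $\operatorname{graph}(C)=\kT L$ is the orthogonal complement of $U$ in $B$.
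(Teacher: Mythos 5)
Your proof is correct, but it takes a genuinely different route from the paper's. The paper works \emph{downstairs}: it first invokes Theorem~\ref{th:positive-extensions} to split $S=A\oplus L$ with $L\subset A^*$ closed and positive, notes that for quasi-null $A$ the condition \eqref{eq:positive-extension-condition} degenerates to the orthogonality relations $\ip fk=\ip gh=0$ of \eqref{eq:orthogonal-c}, and only then applies the Krein transform to the complement $L$ alone; those orthogonality relations are precisely what place $\dom\kT{L}$ inside $\ker(A^*+I)$ and $\ran\kT{L}$ inside $\ker(A^*-I)$, after which $D$ and $V$ are read off. You instead work \emph{upstairs}: you transform everything at once, obtaining a closed symmetric contraction $B=\kT{S}$ extending the closed isometry $U=\kT{A}$, and you produce the orthogonal splitting $B=U\oplus\operatorname{graph}(C)$ from the rigidity of contractions (a contraction that is isometric on a closed subspace must map the orthogonal complement of that subspace into the orthogonal complement of its image), then transport back through the involution. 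The mechanisms generating the crucial orthogonality are therefore different: the paper's comes from the variational argument inside Theorem~\ref{th:positive-extensions} (the choices $\alpha\in\lrb{\pm1,\pm i}$ in \eqref{eq:aux-eq-dS}), yours from contraction rigidity. What your route buys: it is independent of Theorem~\ref{th:positive-extensions} (you need it only for the optional uniqueness remark), and it makes transparent that extending a quasi-null relation to a positive (quasi-null) one \emph{is} the classical problem of extending an isometry to a contraction (isometry), in exact parallel with the Cayley-transform proof of \eqref{eq:von02d}; your converse, verifying contractivity of $\kT{S}$ directly, is likewise self-contained, whereas the paper verifies $L=(V-I)D\geq0$ and cites Theorem~\ref{th:positive-extensions} again. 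What the paper's route buys is brevity in context, since Theorem~\ref{th:positive-extensions} is already established. Two points you should make explicit to close the bookkeeping: $\dom B$ is closed (a closed bounded relation has closed domain), which is what legitimizes $\dom B=\dom U\oplus\mathcal D$ via the orthogonal projection onto the closed subspace $\dom U$; and the asserted identity $(V-I)D=\kT{\operatorname{graph}(C)}$ deserves its one-line check, namely $(I-V)\vE hh=\vE{h+Ch}{h-Ch}$ with $C=-W$, together with $(V-I)D=(I-V)D$.
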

\begin{proof}
If $S$ is positive (quasi-null),  then one has from Theorem~\ref{th:positive-extensions} that $S=A\oplus L$, where $L\subset A^*$ is a closed positive (quasi-null at once with $S$) relation such that 
\begin{gather}\label{eq:orthogonal-c}
\ip gh=\ip fk=0\,,\quad \vE fg\in A,\vE hk\in L\,.
\end{gather}
Now, Theorem~\ref{th:positive-Kt-contraction} (Corollary~\ref{cor:qn-Kt-isometric}) implies that $W=\kT L$ is a closed symmetric contraction (isometry). For $u\in\dom W$, it follows by \eqref{eq:Kt-components} that $u=k+h$ and by \eqref{eq:orthogonal-c},
\begin{gather*}
\ip{f+g}{k+h}=\ip fh+\ip gk=\ip{\vE fg}{\vE hk}=0\,,
\end{gather*}
which implies $\dom W\perp \ran(A+I)$. Similarly, $\ran W\perp \ran(A-I)$. So, by \eqref{eq:Hdecomposition},
\begin{align*}
\dom W\subset \ker(A^*+I)\,,\qquad \ran W\subset \ker(A^*-I)\,.
\end{align*}
Note that $\vE uu\in\dS_1(-A^*)$ and $\vE v{-v}\in\dS_{-1}(-A^*)$, for $\vE uv\in W$. Consider 
\begin{gather*}
D=\lrb{\vE uu\,:\, u\in\dom W}\subset\dS_1(-A^*)
\end{gather*}
and $V\colon D\to \dS_{-1}(-A^*)$ given by $V\vE uu=\vE {-v}v$. Since $\dom W$ is closed, $D$ is closed and bounded. Moreover, it is clear that $V$ is a closed contraction (isometry) in $(\HH)\oplus(\HH)$, since $W$ is. Hence, 
\begin{gather*}
L=\kT W=\lrb{\vE{u+v}{u-v}\,:\,\vE uv\in W}=\lrb{\lrp{I-V}\vE {u}{u}\,:\,\vE uu\in D}=(V-I)D
\end{gather*}
whence one arrives at \eqref{eq:positive-extension-qn}. We have used $(V-I)D=(I-V)D$.

Conversely, denote $L=(V-I)D\subset -A^*$, which is a closed symmetric relation such that it is contained in $A^*$, since $S\subset A^*$. Thus, for  $\vE fg\in A,\vE hk\in L$,
\begin{gather}\label{eq:l-null-condition}
\ip fk=\ip gh\,,\quad -\ip fk=\ip gh\quad\mbox{whence}\quad \ip fk=0\,.
\end{gather}
Besides, for $\vE hk=\vE{u+v}{u-v}\in L$, with $\vE uu\in D$ and $\vE {-v}v=V\vE uu$, since $V$ is a contraction,
\begin{gather}\label{eq:L-positive-qn-aux}
\ip hk=\ip{u+v}{u-v}=\frac12\lrp{\no{\vE uu}^2-\no{\vE v{-v}^2}}\geq 0\,,
\end{gather}
i.e., $L\geq0$. Hence, by virtue of Theorem~\ref{th:positive-extensions}, $S$ is a closed positive extension of $A$. When $V$ is an isometry, one has by \eqref{eq:L-positive-qn-aux} that $L$ is quasi-null. So, in addition to \eqref{eq:l-null-condition}, one readily shows that $S$ is also quasi-null. 
 \end{proof}

\section{Examples: semi-bounded extensions of the infinite star operator}\label{s6}

For a countable, simple and connected graph $G$, with a set of vertices 
\begin{gather}\label{eq:set-vertices}
V\colonequals\{\delta_j\}_{j\in\N_0}\,,\qquad \lrp{\N_0=\N\cup\{0\}} 
\end{gather}
we consider the Hilbert space of square-summable sequences  $(l_2(V),\ip{\cdot}{\cdot})$, having \eqref{eq:set-vertices} as canonical bases. The adjacency operator $A$ of $G$ is the linear operator which maps every vertex $\delta_j\in \dom A$ into the sum of its adjacent vertices, i.e., 
\begin{gather*}
A\delta_j=\sum_{\delta_k\sim\delta_j}\delta_k\,.
\end{gather*}

\subsection{The infinite star graph}\label{sub:isg}

Let us consider $G$ as the infinite star graph (see Fig.\, \ref{fig:graph1}). Then, for $\mathds 1=\sum_{j\in\N}\delta_j$, the adjacency operator of $G$ is a nondensely defined operator given by
\begin{gather*}
A=\lrb{\vE{f}{\ip{\mathds 1}{f}\delta_0}\,:\, f\in{l_2(V)}\ominus{\{\delta_0\}}\,,\, \abs{\ip{\mathds 1}{f}}<\infty}\,.
\end{gather*}
In this fashion, $G$ turns to be directed.

\begin{figure}[h]
\centering
\begin{tikzpicture}[>=stealth',shorten >=1pt,node distance=3cm,on grid,initial/.style    ={}]
  [scale=.5,auto=left,every node/.style={}]
  \node (n0) at (4.5,8) {$\delta_0$};
  \node (n2) at (1.5,7)  {$\delta_1$};
  \node (n3) at (3,6)  {$\delta_2$};
  \node (n4) at (6,6) {$\delta_3$};
  \node (n5) at (7.5,7)  {$\iddots$};
\tikzset{mystyle/.style={-,double=black}} 
\tikzset{every node/.style={fill=white}} 
\path (n0)     edge [mystyle]    (n2);  
\path (n0)     edge [mystyle]    (n3);  
\path (n0)     edge [mystyle]    (n4);    
\path (n0)     edge [mystyle]    (n5);      
\end{tikzpicture}
\hskip.5cm
\begin{tikzpicture}[>=stealth',shorten >=1pt,node distance=3cm,on grid,initial/.style    ={}]
  [scale=.5,auto=left,every node/.style={}]
  \node (n0) at (4.5,8) {$\delta_0$};
  \node (n2) at (1.5,7)  {$\delta_1$};
  \node (n3) at (3,6)  {$\delta_2$};
  \node (n4) at (6,6) {$\delta_3$};
  \node (n5) at (7.5,7)  {$\iddots$};
\tikzset{mystyle/.style={<-,double=black}} 
\tikzset{every node/.style={fill=white}} 
\path (n0)     edge [mystyle]    (n2);  
\path (n0)     edge [mystyle]    (n3);  
\path (n0)     edge [mystyle]    (n4);    
\path (n0)     edge [mystyle]    (n5);      
\end{tikzpicture}
\caption{The infinte star graph turns to be directed.}\label{fig:graph1}
\end{figure}
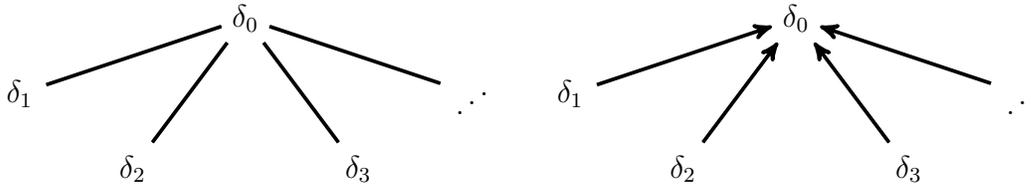

It is clear that $A$ is quasi-null. Besides, one easily computes that the linear functional $ f\in\dom A\mapsto \ip{\mathds 1}{f}$ is not continuous. So, $A$ is not bounded nor closable (cf. \cite[Subsect.\,1.1.2]{MR2953553}).

\begin{proposition}\label{prop:A-closure}
The closure of $A$ is the quasi-null selfadjoint relation
\begin{gather*}
S=0\upharpoonleft_{l_2(V)\ominus\{\delta_0\}}\oplus\Span\lrb{\vE0{\delta_0}}\,.
\end{gather*}
Moreover, $\sigma_p^\infty(S)=\sigma(S)=\{0\}$ and $\sigma_c(S)=\emptyset$.
\end{proposition}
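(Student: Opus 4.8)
The plan is to compute the closure $\cc A=A^{**}$ directly as a set of limits, identify it with $S$, and then read off quasi-nullity, selfadjointness and the spectrum from the operator part. First I would prove the inclusion $\cc A\subset S$. Every $\vE{f_n}{g_n}\in A$ has $g_n=\ip{\mathds 1}{f_n}\delta_0\in\Span\lrb{\delta_0}$ and $f_n\in l_2(V)\ominus\{\delta_0\}$; since both $\Span\lrb{\delta_0}$ and $l_2(V)\ominus\{\delta_0\}$ are closed, any limit $\vE fg$ of such pairs satisfies $f\in l_2(V)\ominus\{\delta_0\}$ and $g=c\delta_0$ for some $c\in\C$, that is, $\vE fg\in S$.

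The reverse inclusion $S\subset\cc A$ is the core of the argument, and it is where the non-closability of $A$ is exploited. Fix $\vE f{c\delta_0}\in S$, so $f_0=0$ and $c\in\C$ is arbitrary. For each $n$ I would let $f^{(n)}$ denote the truncation of $f$ to the coordinates $1,\dots,n$ and put $s_n=\ip{\mathds 1}{f^{(n)}}=\sum_{j=1}^n f_j$. Choosing $M_n\in\N$ large enough that $\abs{c-s_n}/\sqrt{M_n}<1/n$, I would set
\begin{gather*}
f_n=f^{(n)}+\frac{c-s_n}{M_n}\sum_{j=n+1}^{n+M_n}\delta_j\,.
\end{gather*}
Each $f_n$ is finitely supported with $(f_n)_0=0$, hence $f_n\in\dom A$ and $A f_n=\ip{\mathds 1}{f_n}\delta_0=c\delta_0$ by construction, while $\no{f_n-f}\leq\no{f^{(n)}-f}+\abs{c-s_n}/\sqrt{M_n}\to0$. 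Thus $\vE{f_n}{c\delta_0}\in A$ converges to $\vE f{c\delta_0}$, giving $\vE f{c\delta_0}\in\cc A$ and hence $S\subset\cc A$. Combining the two inclusions yields $\cc A=S$.

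It then remains to verify the three properties. That $S$ is quasi-null is immediate from Definition~\ref{def:quasi-null}, since $\ip f{c\delta_0}=c\cc{f_0}=0$ whenever $f_0=0$ (equivalently, the closure of a quasi-null relation is quasi-null). The displayed form of $S$ is already its decomposition \eqref{eq:decomposition-Tclosed} into operator and multivalued parts: $\mul S=\Span\lrb{\delta_0}$ and $\oP S$ is the zero operator on $(\mul S)^\perp=l_2(V)\ominus\{\delta_0\}$, with $\dom S=(\mul S)^\perp$. Since the zero operator on the infinite-dimensional space $(\mul S)^\perp$ is bounded and selfadjoint with $\sigma(\oP S)=\hat\sigma(\oP S)=\sigma_p^\infty(\oP S)=\{0\}$ and $\sigma_c(\oP S)=\emptyset$, Remark~\ref{rmk:dom-multperp-spectral} (the spectral identities \eqref{eq:espectral-properties-inherited} applied to $S_S=\oP S$) transfers these to $S$, whence $\sigma_p^\infty(S)=\sigma(S)=\{0\}$ and $\sigma_c(S)=\emptyset$. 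Finally, $\sigma(S)=\hat\sigma(S)=\{0\}\subset\R$ together with Proposition~\ref{prop:characterization-sa} gives $S=S^*$.

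The main obstacle is the reverse inclusion: because the partial sums $s_n$ need not converge (indeed $f\mapsto\ip{\mathds 1}{f}$ is discontinuous on $\dom A$), one cannot simply use the truncations $f^{(n)}$, whose images $s_n\delta_0$ may fail to converge to $c\delta_0$. The device of spreading the compensating mass $c-s_n$ over a long block of $M_n$ coordinates keeps the $l_2$-correction of order $\abs{c-s_n}/\sqrt{M_n}$, which is forced to zero by the choice of $M_n$; this is exactly the mechanism that produces the multivalued part $\Span\lrb{\delta_0}$ in the closure and accounts for $S$ being a relation rather than an operator.
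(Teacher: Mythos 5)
Your proof is correct, but the key step---the inclusion $S\subset\cc A$---goes by a genuinely different route than the paper's. The paper never constructs approximating sequences: it first establishes that $S$ is selfadjoint (exactly as you do, via $S_S=0$, the transfer \eqref{eq:espectral-properties-inherited}, Remark~\ref{rmk:Asym-AA} and Proposition~\ref{prop:characterization-sa}), and then computes the adjoint explicitly: testing the defining condition of $A^*$ against the pairs $\vE{\delta_n}{\delta_0}\in A$ forces $k_n=h_0$ for all $n\in\N$, whence $h_0=0$ and $k\in\Span\lrb{\delta_0}$, i.e. $A^*\subset S$; taking adjoints and using $S=S^*$ gives $S=S^*\subset A^{**}=\cc A$. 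Your mass-spreading construction $f_n=f^{(n)}+\frac{c-s_n}{M_n}\sum_{j=n+1}^{n+M_n}\delta_j$, with correction of norm $\abs{c-s_n}/\sqrt{M_n}<1/n$, replaces this duality argument by an explicit elementary approximation; it is self-contained, makes transparent exactly how the discontinuity of $f\mapsto\ip{\mathds 1}{f}$ manufactures the multivalued part $\Span\lrb{\vE 0{\delta_0}}$, and identifies the closure without invoking selfadjointness of $S$ (which you then need only for the claim $S=S^*$ itself). What the paper's route buys is brevity and a by-product: since $A^*\subset S=\cc A\subset A^*$, the adjoint is identified along the way, $A^*=S=\cc A$, which your argument does not give directly. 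The remaining verifications (quasi-nullity, selfadjointness through the zero operator part, and the spectral statements, with $\sigma_c(S)=\emptyset$ because the relevant ranges are closed) agree with the paper's.
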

\begin{proof}
The relation $S$ is closed, quasi-null, and contains $\cc A$. Note that $S_S=0$, which is selfadjoint in $l_2(V)\ominus\{\delta_0\}$, and so is $S$, by \eqref{eq:espectral-properties-inherited}, Remark~\ref{rmk:Asym-AA} and Proposition~\ref{prop:characterization-sa}. Now, if 
\begin{gather*}
\vE hk\in A^*\,,\quad \mbox{with}\quad h=\sum_{j\in \N_0}h_j\delta_j, k=\sum_{j\in \N_0}k_j\delta_j\in l_2(V)\,,
\end{gather*}
then $k_n=\ip{\delta_n}k=\ip{\delta_0}h=h_0$, for all $j\in\N$, since $A\delta_n=\delta_0$. It follows that $h_0=0$ and $k=k_0\delta_0$ and $A^*\subset S$. Thus, $S\subset A^{**}=\cc A$ and, hence, equals. Observe that \eqref{eq:espectral-properties-inherited} and Remark~\ref{rmk:Asym-AA} imply  $\sigma_p^\infty(S)=\sigma(S)=\{0\}$.  The above also implies $\sigma_c(S)\subset\{0\}$, and $\sigma_c(S)=\emptyset$ since $\ran S$ is finite-dimensional. 
\end{proof}

\subsection{The infinite directed and weighted star graph} We now consider the directed star graph of Subsection~\ref{sub:isg}, with the condition that its edges are weighted by $w=\sum_{j\in\N}w_j\delta_j\in l_2(V)\ominus\{\delta_0\}$ (see Fig.~\ref{fig:graph2}), where $w_j$ can be complex numbers, but for simplicity of notation, we choose $w_j\in\R\backslash\{0\}$. 
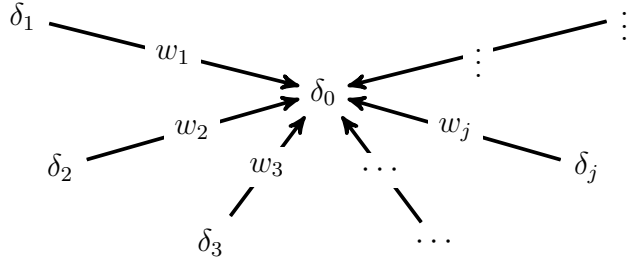
\begin{figure}[h]
\centering
\begin{tikzpicture}[>=stealth',shorten >=1pt,node distance=3cm,on grid,initial/.style    ={}]
  [scale=.5,auto=left,every node/.style={}]
  \node (n0) at (4.5,8) {$\delta_0$};
  \node (n1) at (0.5,9)  {$\delta_1$};
  \node (n2) at (1,7)  {$\delta_2$};
  \node (n3) at (3,6)  {$\delta_3$};
  \node (n4) at (6,6) {$\cdots$};
  \node (n5) at (8,7)  {$\delta_j$};
    \node (n6) at (8.5,9)  {$\vdots$};
\tikzset{mystyle/.style={<-,double=black}} 
\tikzset{every node/.style={fill=white}} 
\path (n0)     edge [mystyle]    node   {$w_1$}  (n1) ;   
\path (n0)     edge [mystyle]    node   {$w_2$} (n2);  
\path (n0)     edge [mystyle]    node   {$w_3$} (n3);  
\path (n0)     edge [mystyle]    node   {$\cdots$} (n4);    
\path (n0)     edge [mystyle]    node   {$w_j$} (n5);    
\path (n0)     edge [mystyle]    node   {$\vdots$} (n6);    
\end{tikzpicture}
\caption{The infinite directed and weighted star graph.}\label{fig:graph2}
\end{figure}

In this fashion, the adjacency operator of the graph is
\begin{gather}\label{eq:adjacency-ws}
A=\lrb{\vE f{\ip{w}{f}\delta_0}\,:\,f\in{l_2(V)}\ominus\{\delta_0\}}\,.
\end{gather}
which is closed, bounded and quasi-null, with $\dom A={l_2(V)}\ominus\{\delta_0\}$.

\begin{proposition}
The operator \eqref{eq:adjacency-ws} satisfies
\begin{gather}\label{eq:sepctral-core-ws}
\sigma_p^\infty(A)=\hat\sigma(A)=\{0\}\quad\mbox{and}\quad \sigma_c(A)=\emptyset\,.
\end{gather}
\end{proposition}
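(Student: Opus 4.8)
The plan is to determine $\ker A$ and $\ran A$ explicitly and then read off every spectral set from the quasi-null structure of $A$. Since $A$ is quasi-null, Remark~\ref{rm:spectral-core-qn} gives at once the inclusion $\hat\sigma(A)\subset\{0\}$. Combined with the decomposition $\hat\sigma(A)=\sigma_p(A)\cup\sigma_c(A)$ and the trivial containment $\sigma_p^\infty(A)\subset\sigma_p(A)$, this forces each of $\sigma_p^\infty(A)$, $\sigma_p(A)$, and $\sigma_c(A)$ to be a subset of $\{0\}$. Thus the whole matter reduces to testing the single point $\zeta=0$.

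For the point non-discrete spectrum I would compute the kernel directly. Using that the inner product is antilinear in its first entry and that $f\perp\delta_0$ whenever $f\in\dom A$, one obtains $\ker A=\lrb{f\in l_2(V)\ominus\{\delta_0\}\,:\,\ip wf=0}$, which is the orthogonal complement of the single nonzero vector $w$ inside the infinite-dimensional space $l_2(V)\ominus\{\delta_0\}$. This has codimension one, hence is infinite-dimensional, so $0\in\sigma_p^\infty(A)$. Chaining this with the inclusions above, namely $0\in\sigma_p^\infty(A)\subset\sigma_p(A)\subset\hat\sigma(A)\subset\{0\}$, yields $\sigma_p^\infty(A)=\sigma_p(A)=\hat\sigma(A)=\{0\}$, which is the first assertion of \eqref{eq:sepctral-core-ws}.

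It then remains to rule out $0$ from the continuous spectrum. Since $\sigma_c(A)\subset\hat\sigma(A)\subset\{0\}$, only $\zeta=0$ could occur, and $0\in\sigma_c(A)$ would require $\ran A$ to fail to be closed. But $Af=\ip wf\delta_0$ shows $\ran A=\Span\lrb{\delta_0}$ is one-dimensional, hence closed; therefore $0\notin\sigma_c(A)$ and $\sigma_c(A)=\emptyset$.

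There is no serious obstacle here: once the two elementary facts are in place, that $\ker A$ is infinite-dimensional and $\ran A$ is one-dimensional, the conclusion is pure bookkeeping on the spectral inclusions guaranteed by quasi-nullness. The only point demanding care is the inner-product convention (antilinearity in the first slot), which must be used correctly both to confirm $\ip f{Af}=0$ and to obtain the codimension-one description of $\ker A$.
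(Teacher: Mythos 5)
Your proposal is correct and follows essentially the same route as the paper: quasi-nullness plus Remark~\ref{rm:spectral-core-qn} gives $\hat\sigma(A)\subset\{0\}$, an infinite-dimensional kernel gives $0\in\sigma_p^\infty(A)$, and closedness of the one-dimensional range $\Span\{\delta_0\}$ rules out $\sigma_c(A)$. The only cosmetic difference is that you identify $\ker A$ as the codimension-one subspace $\lrb{f\in l_2(V)\ominus\{\delta_0\}\,:\,\ip wf=0}$, whereas the paper exhibits the explicit kernel vectors $w_1^{-1}\delta_1-w_j^{-1}\delta_j$, $j=2,3,\dots$; both establish the same fact.
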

\begin{proof} Note that $A(w_1^{-1}\delta_1-w_j^{-1}\delta_j)=0$, for $j=2,3,\dots$ Hence, by Remark~\ref{rm:spectral-core-qn},  
\begin{gather*}
\{0\}\subset \sigma_p^\infty(A)\subset \hat\sigma(A)\subset\{0\}\,.
\end{gather*}
Besides, since $\ran A=\Span\{\delta_0\}$, then $0\notin\sigma_c(A)$. Thereby, one obtains \eqref{eq:sepctral-core-ws}.
\end{proof}

The following characterized the adjoint of \eqref{eq:adjacency-ws} and its deficiency space. 
\begin{lemma} The adjoint of  \eqref{eq:adjacency-ws} is given by 
\begin{gather}\label{eq:adjoint-iwsg}
A^*=\lrb{\vE{h}{\ip{\delta_0}hw}\,:\, h\in {l_2(V)}}\oplus\Span\lrb {\vE 0{\delta_0}}\,.
\end{gather}
Besides, 
\begin{gather}\label{eq:deficiency-index-iwsg}
 \dS_\zeta(A^*)=\Span\lrb{\vE{\delta_0+w/\zeta}{\zeta\delta_0+w}}\,,\quad \zeta\in\C\backslash\{0\}\,.
\end{gather}
Hence, the index \eqref{eq:semi-bounded-index} of $A$ is $\eta_A=1$.
\end{lemma}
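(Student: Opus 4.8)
The plan is to compute $A^*$ directly from the defining relation $A^*=\lrb{\vE hk\in\HH:\ip kf=\ip hg,\ \forall\,\vE fg\in A}$, and then extract both the deficiency space and the index as immediate corollaries of that computation. Every element of $A$ has the form $\vE f{\ip wf\delta_0}$ with $f\in l_2(V)\ominus\{\delta_0\}$, so I would fix $\vE hk\in\HH$ and unwind the condition: it reads $\ip kf=\ip h{\ip wf\delta_0}=\ip wf\,\ip h{\delta_0}$ for all $f\perp\delta_0$. Using antilinearity in the first slot, the right-hand side rewrites as $\ip{\ip{\delta_0}h\,w}{f}$, so the condition becomes $\ip{k-\ip{\delta_0}h\,w}{f}=0$ for every $f\in l_2(V)\ominus\{\delta_0\}$. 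Hence $k-\ip{\delta_0}h\,w\in(\{\delta_0\}^\perp)^\perp=\Span\{\delta_0\}$, i.e. $k=\ip{\delta_0}h\,w+\lambda\delta_0$ for some $\lambda\in\C$, with $h\in l_2(V)$ unrestricted.

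Splitting off the term $\lambda\delta_0=\lambda\vE0{\delta_0}$ from this description gives exactly the decomposition \eqref{eq:adjoint-iwsg}, and the two summands together clearly exhaust $A^*$. That the sum is orthogonal in $\HH$ follows at once from $\ip w{\delta_0}=0$ (since $w\in l_2(V)\ominus\{\delta_0\}$), which forces $\vE h{\ip{\delta_0}h\,w}\perp\vE0{\delta_0}$; the summand $\Span\lrb{\vE0{\delta_0}}$ is precisely $\mul A^*$, obtained by taking $h=0$.

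Next I would intersect $A^*$ with the relations $\lrb{\vE f{\zeta f}}$ to obtain $\dS_\zeta(A^*)$. Writing $h=h_0\delta_0+h'$ with $h'\perp\delta_0$ and imposing $\zeta h=\ip{\delta_0}h\,w+\lambda\delta_0=h_0w+\lambda\delta_0$, the $\delta_0$-component yields $\lambda=\zeta h_0$ while the component in $\{\delta_0\}^\perp$ yields $\zeta h'=h_0w$. Since $\zeta\neq0$, this gives $h'=h_0w/\zeta$, hence $h=h_0(\delta_0+w/\zeta)$ and $k=\zeta h=h_0(\zeta\delta_0+w)$, which is exactly \eqref{eq:deficiency-index-iwsg}. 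Because the spanning vector $\vE{\delta_0+w/\zeta}{\zeta\delta_0+w}$ is nonzero and $\hat\rho(A)=\C\backslash\{0\}$ by \eqref{eq:sepctral-core-ws}, the index \eqref{eq:semi-bounded-index} of $A$ is $\eta_A=\dim\dS_\zeta(A^*)=1$.

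The argument is essentially a direct computation; the only place demanding care is the bookkeeping of the antilinear-in-the-first-slot convention when rewriting $\ip wf\,\ip h{\delta_0}$ as $\ip{\ip{\delta_0}h\,w}{f}$, together with the observation that $h$ ranges freely over all of $l_2(V)$ (not merely over $\{\delta_0\}^\perp$), so that the multivalued summand $\Span\lrb{\vE0{\delta_0}}$ genuinely appears. I expect no real obstacle beyond keeping these conjugations straight.
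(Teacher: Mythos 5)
Your proof is correct and follows essentially the same route as the paper: a direct computation of $A^*$ from its defining condition, followed by solving $\zeta h=\ip{\delta_0}{h}w+\lambda\delta_0$ componentwise to obtain \eqref{eq:deficiency-index-iwsg} and $\eta_A=1$. The only cosmetic difference is that the paper argues by double inclusion, testing the adjoint condition against the basis vectors $\delta_j$, whereas you test against all $f\perp\delta_0$ and invoke $(\lbrace\delta_0\rbrace^\perp)^\perp=\Span\lbrace\delta_0\rbrace$ — the same computation in slightly different bookkeeping.
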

\begin{proof} If we denote the right-hand side of \eqref{eq:adjoint-iwsg} by $S$, then a simple computation shows that $S\subset A^*$. Besides, since $A\delta_j=w_j\delta_0$,  for $j\in\N$, if $\vE hk\in A^*$, with $k=\sum_{l\in\N_0}k_l\delta_l$,  then  $k_j=\ip{\delta_j}{k}=w_j\ip{\delta_0}{h}$, which implies $A^*\subset S$, i.e., $A^*=S$. 

Now, for $\zeta\in\C\backslash\{0\}$ and $u=\delta_0+w/\zeta$, we check at once that $\Span\lrb{\vE{u}{\zeta u}}\subset A^*$. To show the other inclusion, if $\vE v{\zeta v}\in A^*$, then $\ip{\delta_0}v\neq0$, otherwise $v=0$. Thereby, we may suppose that $\ip{\delta_0}v=1$ and by \eqref{eq:adjoint-iwsg}, $\zeta v=w+\zeta\delta_0$, which implies $v=u$, viz. \eqref{eq:deficiency-index-iwsg}.
\end{proof}

\begin{remark} All the selfadjoint extensions of \eqref{eq:adjacency-ws} are in one-to-one correspondence with $\beta\in\{\zeta\in\C\,:\,\abs\zeta=1\}$, given by
\begin{gather}\label{eq:sa-extensions-A}
S_\beta=A\oplus\Span\lrb{\vE{i(\beta+1)w+(\beta-1)\delta_0}{(\beta-1)w-i(\beta+1)\delta_0}}\,.
\end{gather}
Indeed, since $\eta_A=1$, one has from Remark~\ref{rm:sb-deficiency-index} that all the symmetric extensions of $A$ are selfadjoint and given by \eqref{eq:von02d}. Hence, by \eqref{eq:deficiency-index-iwsg}, one arrives at \eqref{eq:sa-extensions-A}.
\end{remark}

\begin{theorem} There exists only one positive selfadjoint extension of \eqref{eq:adjacency-ws}, which is quasi-null, given by 
\begin{gather}\label{eq:sa-extension-qn-A}
S_1=A\oplus\Span\lrb{\vE{w}{\delta_0}}=0\upharpoonleft_{l_2(V)\ominus\{\delta_0\}}\oplus\Span\lrb{\vE0{\delta_0}}\,,
\end{gather}
and satisfies $\sigma_p^\infty(S_1)=\sigma(S_1)=\{0\}$, $\sigma_c(S_1)=\emptyset$. Besides, for every $\alpha\in\R\backslash\{0\}$,
\begin{gather}\label{eq:lsb-usb-extensions}
A_\alpha=A\dotplus\Span\lrb{\vE{\delta_0+w/\alpha}{\alpha\delta_0+w}}
\end{gather}
is the unique selfadjoint extension of $A$, with eigenvalues $\sigma_d(A_\alpha)=\{\alpha,-\no w^2/\alpha\}$ of multiplicity one, $\sigma_p^\infty(A_\alpha)=\{0\}$ and $\sigma_c(A_\alpha)=\emptyset$. Hence, 
\begin{gather*}
\sigma(A_\alpha)=\{0,\alpha,-\no w^2/\alpha\}\,.
\end{gather*}
\end{theorem}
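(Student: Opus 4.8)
The plan is to establish the two assertions in turn, relying throughout on the facts that $A$ in \eqref{eq:adjacency-ws} is closed, bounded and quasi-null with index $\eta_A=1$, together with the description \eqref{eq:adjoint-iwsg} of $A^*$ and \eqref{eq:deficiency-index-iwsg} of its deficiency spaces.

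For the first claim I would appeal to Theorem~\ref{th:positive-extensions-qn}, which says that every positive (quasi-null) selfadjoint extension is of the form $S=A\oplus(V-I)D$ with $D\subset\dS_1(-A^*)$ and $V\colon D\to\dS_{-1}(-A^*)$ a contraction (isometry). A short calculation from \eqref{eq:adjoint-iwsg} gives $\dS_1(-A^*)=\Span\{(\delta_0-w,\delta_0-w)\}$ and $\dS_{-1}(-A^*)=\Span\{(\delta_0+w,-(\delta_0+w))\}$, two one-dimensional spaces of equal norm. Since $\eta_A=1$, selfadjointness forces $\dim[S/A]=\eta_A=1$ through \eqref{eq:indices-symmetric-extensions}, so $D$ must be all of $\dS_1(-A^*)$ and $V$ is a single scalar multiple; imposing that $S$ be symmetric, equivalently $L=(V-I)D\subset A^*$, then pins this scalar down uniquely and makes $V$ an isometry. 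Hence the positive selfadjoint extension is unique, is quasi-null, and $L=\Span\{(w,-\delta_0)\}$, so that $S_1=A\oplus L=0\upharpoonleft_{l_2(V)\ominus\{\delta_0\}}\oplus\Span\{(0,\delta_0)\}$. Its spectrum is then read off exactly as in Proposition~\ref{prop:A-closure}: the operator part $(S_1)_{S_1}$ is the zero operator, so $\sigma(S_1)=\{0\}$; its kernel $l_2(V)\ominus\{\delta_0\}$ is infinite-dimensional, giving $\sigma_p^\infty(S_1)=\{0\}$; and $\ran S_1=\Span\{\delta_0\}$ is closed, so $\sigma_c(S_1)=\emptyset$, using \eqref{eq:espectral-properties-inherited}.

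For the family $A_\alpha$ I would start from Corollary~\ref{coro:sb-extensions-qn-fi}: for each $\alpha\in\R\backslash\{0\}$ the relation $A_\alpha=A\dotplus\dS_\alpha(A^*)$ is the unique selfadjoint extension for which $\alpha$ is an eigenvalue, necessarily of multiplicity $\eta_A=1$, with $\sigma_e(A_\alpha)\subset\{0\}$. The eigenvalue $\alpha$ is immediate from \eqref{eq:deficiency-index-iwsg}, since $\delta_0+w/\alpha$ generates $\dS_\alpha(A^*)$. To find the remaining eigenvalue I would write a general element of $A_\alpha$ as a sum of an element of $A$ and a multiple of $(\delta_0+w/\alpha,\alpha\delta_0+w)$, impose the eigenvalue equation, and split it into its $\delta_0$-component and its $(\delta_0)^\perp$-component; eliminating the two free parameters reduces the system to the quadratic $\alpha\mu^2+(\no w^2-\alpha^2)\mu-\alpha\no w^2=0$, whose roots are $\alpha$ and $-\no w^2/\alpha$. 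Next, $\ker A_\alpha=\{(f,0):f\in l_2(V)\ominus\{\delta_0\},\,f\perp w\}$ is infinite-dimensional, so $0\in\sigma_p^\infty(A_\alpha)$, while $\ran A_\alpha\subset\Span\{\delta_0,w\}$ is finite-dimensional, hence closed, so $\sigma_c(A_\alpha)=\emptyset$; together these give $\sigma(A_\alpha)=\{0,\alpha,-\no w^2/\alpha\}$. Since $\alpha$ and $-\no w^2/\alpha$ have opposite signs for every $\alpha\neq0$, no $A_\alpha$ is positive, which re-confirms the uniqueness of $S_1$ among positive selfadjoint extensions.

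I expect the main obstacle to be the computation of the second eigenvalue $-\no w^2/\alpha$: one must parametrize $A_\alpha$ correctly, keep track of the antilinearity of the inner product when evaluating $\ip wf$, and reduce the two scalar equations to the displayed quadratic without introducing spurious roots. The remaining bookkeeping — confirming that the kernel is genuinely infinite-dimensional and the range only two-dimensional, so that $0$ has infinite multiplicity and no continuous spectrum appears — is routine once the explicit forms of $A_\alpha$, $\ker A_\alpha$ and $\ran A_\alpha$ are in hand.
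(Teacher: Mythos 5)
Your proposal is correct, and its first half takes a genuinely different route from the paper's. For the uniqueness of the positive selfadjoint extension, the paper parametrizes all selfadjoint extensions via the second von Neumann formula as $S_\beta$ in \eqref{eq:sa-extensions-A}, $\abs\beta=1$, and then imposes the quasi-null orthogonality condition of Theorem~\ref{th:positive-extensions}, getting $0=(\beta-1)\no w^2$, i.e., $\beta=1$. You instead invoke Theorem~\ref{th:positive-extensions-qn} directly: you compute $\dS_{1}(-A^*)=\Span\lrb{\vE{\delta_0-w}{\delta_0-w}}$ and $\dS_{-1}(-A^*)=\Span\lrb{\vE{\delta_0+w}{-\delta_0-w}}$, use \eqref{eq:indices-symmetric-extensions} to force $\dim[S/A]=\eta_A=1$, so that $D$ is the whole one-dimensional deficiency space and $V$ is a scalar contraction, and then let the necessary condition $L=(V-I)D\subset A^*$ pin the scalar down, which makes $V$ an isometry. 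Both arguments are sound; yours has the merit of actually exercising the formula \eqref{eq:positive-extension-qn} that the paper develops but uses only in its closing remark, and your summand $L=\Span\lrb{\vE{w}{-\delta_0}}$ is the correct one for an \emph{orthogonal} decomposition: with $\vE{w}{\delta_0}$, as written on the left of \eqref{eq:sa-extension-qn-A}, the sum with $A$ is direct but not orthogonal, although it generates the same linear set, so your conclusion agrees with the second expression in \eqref{eq:sa-extension-qn-A}. For the family $A_\alpha$ your argument essentially coincides with the paper's: Corollary~\ref{coro:sb-extensions-qn-fi} gives uniqueness, multiplicity and $\sigma_e(A_\alpha)\subset\{0\}$, and your quadratic $\alpha\mu^2+(\no w^2-\alpha^2)\mu-\alpha\no w^2=0$ has exactly the roots $\alpha$ and $-\no w^2/\alpha$ that the paper obtains by substituting $\zeta\notin\{0,\alpha\}$. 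Your explicit verification that $\ker A_\alpha$ is infinite-dimensional and $\ran A_\alpha$ finite-dimensional is in fact slightly more careful than the paper's bare appeal to the common essential spectrum, since equality of essential spectra alone does not separate $\sigma_p^\infty$ from $\sigma_c$.

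One caveat: your closing aside, that no $A_\alpha$ is positive and that this ``re-confirms'' the uniqueness of $S_1$, should not be read as an independent uniqueness proof. It would require knowing that every selfadjoint extension other than $S_1$ equals some $A_\alpha$, i.e., has a nonzero eigenvalue; that fact itself follows from the uniqueness you already established (a selfadjoint extension with spectrum $\{0\}$ has zero operator part, hence is quasi-null, hence equals $S_1$), so the remark is only a consistency check, not a shortcut.
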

\begin{proof} According to Theorem~\ref{th:positive-extensions}, the selfadjoint extension \eqref{eq:sa-extensions-A} is positive if 
\begin{gather*}
0=\ip w{(\beta-1)w-i(\beta+1)\delta_0}=(\beta-1)\no w^2\,,
\end{gather*}
i.e., when $\beta=1$. The spectral properties of \eqref{eq:sa-extension-qn-A} follows from Proposition~\ref{prop:A-closure}. For the second part, Corollary~\ref{coro:sb-extensions-qn-fi} and \eqref{eq:deficiency-index-iwsg} imply \eqref{eq:lsb-usb-extensions}, and its eigenvalues are of multiplicity one, with $\alpha\in\sigma_d(A_\alpha)$. Besides, one obtains from Remark~\ref{rm:sb-deficiency-index}-\eqref{it2:sb-di} that $\sigma_p^\infty(A_\alpha)=\{0\}$ and $\sigma_c(A_\alpha)=\emptyset$. Now, if $\vE u{\zeta u}\in A_\alpha$, with $\zeta\in\R$ and $\zeta\notin\{0,\alpha\}$, then by \eqref{eq:lsb-usb-extensions}, there exist $f\in\dom A$, $a\in\C$, such that $u=f+a(\delta_0+w/\alpha)$ and 
\begin{gather*}
\zeta\lrp{f+a(\delta_0+w/\alpha)}=\ip wf\delta_0+a(\alpha\delta_0+w)\,,
\end{gather*}
which implies $\zeta f=-a(\zeta-\alpha)w/\alpha$ and $\ip wf=a(\zeta-\alpha)$. Since $a\neq0$ (otherwise $0=f=u$), we may suppose $a=1$. Thereby, 
\begin{gather*}
\zeta (\zeta-\alpha)=\ip w{\zeta f}=-(\zeta-\alpha)\frac{\no w^2}\alpha\,,
\end{gather*}
whence one deduces $\zeta=-\no w^2/\alpha$, with eigenvector $u=\delta_0-\alpha w/\no w^2$.\end{proof}

\begin{remark}
 On account of Theorem~\ref{th:positive-extensions-qn}, the quasi-null extension \eqref{eq:sa-extension-qn-A} can be described by 
\begin{gather*}
S_1=A\oplus (V-I)\dS_1(-A^*)\,,
\end{gather*}
where $V\colon \dS_1(-A^*)\to \dS_{-1}(-A^*)$, is such that $V\vE{\delta_0-w}{-\delta_0+w}=\vE{\delta_0+w}{\delta_0+w}$.
\end{remark}

\subsection*{Acknowledgments} This work was partially supported by CONACYT-Mexico (Grant CF-2019-684340) and  UAM-PEAPDI 2023: ``Semigrupos cu\'anticos de Markov: Operadores de transici\'on de niveles de energ\'ia y sus generalizaciones".

\def\cprime{$'$} \def\lfhook#1{\setbox0=\hbox{#1}{\ooalign{\hidewidth
  \lower1.5ex\hbox{'}\hidewidth\crcr\unhbox0}}} \def\cprime{$'$}
  \def\cprime{$'$} \def\cprime{$'$} \def\cprime{$'$} \def\cprime{$'$}
  \def\cprime{$'$} \def\cprime{$'$}
\providecommand{\bysame}{\leavevmode\hbox to3em{\hrulefill}\thinspace}
\providecommand{\MR}{\relax\ifhmode\unskip\space\fi MR }
\providecommand{\MRhref}[2]{%
  \href{http://www.ams.org/mathscinet-getitem?mr=#1}{#2}
}
\providecommand{\href}[2]{#2}

\end{document}